\definecolor{darkred}{rgb}{0.6,0.0,0.1}
\definecolor{darkgreen}{rgb}{0,0.5,0}
\definecolor{darkblue}{rgb}{0,0,0.5}
\renewcommand{\cite}{\citet}
\renewcommand{\Delta}{D}
\newcommand{\Ex}{{\mathbb E}}
\newcommand{\Var}{ {\mathbb V}ar}
\newcommand{\set}[1]{{\left\lbrace #1\right\rbrace }}	
\def\1{\mathop{\mathbbm 1}\nolimits}
\newcommand{\PP}{\mathbb{P}}
\newcommand{\sol}{\varphi}
\newcommand{\cF}{{\cal F}}
\newcommand{\cX}{{\cal X}}
\newcommand{\cN}{{\cal N}}
\newcommand{\cH}{{\cal H}}
\newtheorem{prop}{Proposition}[section]
\newtheorem{coro}[prop]{Corollary}
\newtheorem{theo}[prop]{Theorem}
\newtheorem{lem}[prop]{Lemma}
\newtheorem{rem}{Remark}[section]
\newtheorem{example}{Example}[section]
\newtheorem{assA}{Assumption}
\numberwithin{equation}{section}
 \author{\textsc{  Christoph Breunig}\thanks{Department of Economics, Emory University, Rich Memorial Building, Atlanta, GA 30322, USA, e-mail:
\url{christoph.breunig@emory.edu}}\\
{\small \textit{Emory University}}
  \and\textsc{ Peter Haan}\thanks{DIW Berlin  and Freie Universit\"at Berlin, Mohrenstr. 58, 10117 Berlin, Germany, e-mail: \url{ phaan@diw.de}}\\
  {\small \textit{DIW Berlin and Freie Universit\"at Berlin}}
}
 \title{Nonparametric Regression with Selectively Missing Covariates\thanks{The authors gratefully thank the Co-Editor Jianqing Fan and two anonymous referees for their many constructive comments on the previous version of the paper. Support by German Science Foundation through CRC TRR 190 is gratefully acknowledged.}
 }
\begin{document}
   \maketitle
\begin{abstract}
\vskip -.3cm
{\footnotesize We consider the problem of regression with selectively observed covariates
 in a nonparametric framework. Our approach relies on instrumental variables that explain variation in the latent covariates but have no direct effect on selection. The regression function of interest is shown to be a weighted version of observed conditional expectation where the weighting function is a fraction of selection probabilities. Nonparametric identification of the  fractional probability weight (FPW) function is achieved via a partial completeness assumption. We provide primitive functional form assumptions for partial completeness to hold. The identification result is constructive for the FPW series estimator. We derive the rate of convergence and also the pointwise asymptotic distribution. In both cases, the asymptotic performance of the FPW series estimator does not suffer from the inverse problem which derives from the nonparametric instrumental variable approach. In a Monte Carlo study, we analyze the finite sample properties of our estimator and we compare our approach to inverse probability weighting, which can be used alternatively for unconditional moment estimation.
In the empirical application, we focus on two different applications. We estimate the association between income and health using linked data from the SHARE survey and administrative pension information and use pension entitlements as an instrument. In the second application we revisit the question how income affects the demand for housing based on data from the German Socio-Economic Panel Study (SOEP). In this application we use regional income information on the residential block level as an instrument.  In both applications we show that income is selectively missing and we demonstrate that standard methods that do not account for the nonrandom selection process lead to significantly biased estimates for individuals with low income.}
\end{abstract}
{\small \begin{tabbing}
\noindent \emph{Keywords:} \=Selection model, instrumental variables, fractional probability weighting, \\
 \> nonparametric identification, partial completeness, incomplete data, \\
 \> series estimation, income distribution, health, housing.\\[.2ex]
\end{tabbing}}
\section{Introduction}

Sample selection is a central challenge for empirical evaluation studies. Nonrandom selection can affect the empirical analysis in many ways, for example through nonrandom selection into treatment programs, selective measurement error or through selective nonresponse or missingness of data.
In this paper, we propose an instrumental variable approach to address the problem of nonrandom selection.
 We provide constructive nonparametric identification results and build on those to establish a novel fractional probability weighting estimator.

While the methodology is general and applicable to many situations in which selection might be problematic the leading example in this paper will be selective nonresponse and selective missing data.
We are interested in the identification and estimation of the nonparametric regression function $g(x)=\Ex[Y|X^*=x]$ where $Y$ is always observed but $X^*$ is only selectively observed. In this case, parts of the information of the covariates are missing not at random for some sampling units. Without accounting for selectivity of responses, statements about individual behavior based on such incomplete data might be severely biased.

In this paper, we establish identification of the nonparametric regression function $g(x)=\Ex[Y|X^*=x]$  based on instrumental variables that explain variation in the latent covariates but have no direct effect on selection. Such an instrumental variable approach is well suited when selection is driven by the latent variables $X^*$. We show that the regression function $g$ can be written as a weighted version of its observed counterpart. The weighting function is determined by a fraction of selection probabilities, i.e., fractional probability weights (FPW), that depends on latent variables. We propose a novel identification restriction, the so called \textit{partial completeness assumption}, which implies identification of the FPW and thus of the nonparametric regression function $g$. In contrast to usual completeness assumptions required for identification of nonparametric instrumental variable models, we are able to provide primitive, functional form conditions for the partial completeness assumption to hold. We emphasize that these functional form conditions do not imply (semi-)parametric restrictions but only impose a nonparametric structure in different forms of separability of $Y$ and $X^*$. Specifically, we show that a nonparametric generalized additive structure of the selection probability is sufficient to obtain identification of the function $g$.

Based on the constructive nonparametric identification result we propose a novel nonparametric FPW series estimator that is convenient for implementation.
We show that our estimator has a rate of convergence that coincides with usual nonparametric regression estimators, i.e., the asymptotic performance of the estimator is the same as of an estimator with full information of the underlying selection mechanism.
We establish asymptotic normality of the estimator  and show that the asymptotic variance is not necessarily enlarged by FPW estimation.
We also propose a bootstrap procedure to construct uniform confidence bands. A Monte Carlo simulation study demonstrates the improvements of our approach over missing at random (MAR) estimators. In particular, we highlight our contribution also in a finite sample analysis of linear regression with alternative inverse probability weighting estimators (IPW).

Finally, we use the method in two different empirical applications. Both applications are important for the discussion about income inequality and highly relevant for public policy. First, we use the developed methodology to analyze the association between income and the risk of bad health. The empirical analysis is based on linked data from the Survey of Health, Aging and Retirement in Europe (SHARE). We exploit a specific feature of the data which allows us to link a sub-sample of the survey data to administrative data of the German pension insurance.\footnote{\cite{BinMar_17} compare self reported information about income and education from SHARE  with matched information from Danish administrative data in order to study the implications of measurement error.} We find that income in the SHARE data is selectively missing and we demonstrate that standard methods that do not account for the nonrandom selection process are biased, specifically for individuals with low incomes. Assuming linearity of the regression function $g$  the point estimate of income is significantly negative when imposing MAR however it is not significantly different from zero when accounting for nonrandom nonresponse.
In the second example we analyze how housing varies with income. We quantify the relationship between labor earnings and the probability to own a house. For this empirical analysis we use data from the German Socio Economic Panel (SOEP) and exploit  information about the regional purchasing power collected on the residential block level (Sub-Zip code level) as an instrument. Again we find that earnings are selectively missing and we demonstrate that the estimates derived in standard methods are  biased for individuals with earnings in the lowest decile which is a specifically relevant group for public policy. For individuals with higher earnings the estimates from the different methods do not differ significantly.

Our paper is linked to several strands of the literature.  The most common way to deal with missing data is to assume missing at random pioneered by \cite{little2002}. In the context of selectively missing covariates, a sieve semiparametric  maximum likelihood estimator was proposed by \cite{chen2007}.
In contrast, an instrumental variable strategy, as proposed in the paper, was used so far only to deal with endogenous missingness of dependent variables, see, for instance,  \cite{tang2003}, \cite{ramalho2013}, and \cite{2010Hault}. Also \cite{breunig2015} consider the problem of nonparametric regression with selective nonresponse of the dependent variables; \cite{zhao2015} focus on a semiparametric approach. There only has been minor attention to selectively observed covariates. One example is \cite{zhaostatistica} who consider a semiparametric approach to deal with selectively missing covariates that is crucially different from ours. While \cite{zhaostatistica} require a parametric specification of the distribution of outcome given potential covariates, we leave these conditional distribution unrestricted. We establish nonparametric identification of the regression function and hence ensure that the identification is not due to specific functional form restrictions that might be violated in practice.

The paper adds as well to the literature on income inequality, more specifically to studies on the income gradient on health outcomes and mortality e.g. \cite{Pre_1975}, \cite{DeatPax_98}, \cite{CutDeaLLe_06}, or \cite{CutLleVog_11}, and on the effect  of income on housing and on housing demand, see e.g.  \cite{QuiRap2004}, \cite{Albouyetal2016} or \cite{DusFitZim2018}. In general these studies are based on survey data in which wealth, income, health and housing  information and further demographic variables are self reported. As shown in \cite{breunig2017} information on income or earnings in surveys is likely to suffer from nonrandom selection which might result in biased estimates of the association between income and health or income and home ownership. In this respect this study extends the previous literature as we account for nonrandom nonresponse of the income information.

The paper is organized as follows. In Section \ref{model_identification} we establish identification of our nonparametric model. In Section \ref{s_regression_estimation} we derive the FPW series estimator, establish its rate of convergence and its asymptotic normality, and derive uniform bootstrap confidence bands. Section \ref{s_Monte_Carlo_simulation} provides Monte Carlo simulations and discusses implications of FPW estimation to unconditional moments.
In Section \ref{s_empiricalapplication} we apply our methodology to the two different empirical applications. All proofs can be found in Appendix \ref{app:proofs}. Appendix \ref{app:tech} provides some technical results. Finally, Appendix \ref{sec:mis:dep} provides an extension when also the dependent variable is selectively missing.

\section{Nonparametric Identification}\label{model_identification}
This section consists of two subsections. In Subsection \ref {ss_setup}, we provide assumptions required for identification. In particular, we introduce a novel restriction, i.e., the partial completeness assumption, and provide primitive conditions for it. Subsection \ref{ss_ident} establishes identification of the nonparametric regression function.

\subsection{Setup and Main Assumptions}\label{ss_setup}
 Given an observable outcome  variable $Y$ and latent covariates $X^*$ our interest lies in the regression function $g(x)=\Ex[Y|X^*=x]$.
Identification relies on instrumental variables $W$ that explain variations of the latent variable $X^*$ but are not directly related to the selection mechanism $\Delta$. This is formalized in the following.
Throughout the paper, we assume that a sample $(\Delta_1,Y_1,W_1),\dots,(\Delta_n,Y_n,W_n)$ of $(\Delta,Y,W)$ is observed for each individual. A $d_x$-- dimensional vector of covariates $X^*$ is only fully observed depending on a binary indicator variable $\Delta$, i.e., $X^*$ is observed when $\Delta=1$ and missing when $\Delta=0$.
We write $X=\Delta X^*$.\footnote{The situation can be easily extended to a multivariate version where $\Delta$ denotes a $d_x$--dimensional vector of missing data indicators. In order to keep the notation simple we do not treat this case explicitly.} Under the  assumptions  presented below we see that the selection probability conditional on $(Y,X^*)$, i.e., $\PP(\Delta=1|Y,X^*)$, is only partially identified but  still point identification of the regression function $g$ is established.
    \begin{assA}[Exclusion Restriction]\label{A_instruments}
      It holds that
        \begin{align*}
      \PP(\Delta = 1|Y, X^*,W) = \PP(\Delta = 1|Y, X^*).
        \end{align*}
    \end{assA}
Assumption \ref{A_instruments} states an exclusion restriction of the random vector $W$.
It excludes any relation between $W$ and the selection mechanism $\Delta$ that is not channeled through $(Y,X^*)$.  The setting corresponds to the measurement error set up, where instrumental variables are required to drive the latent, true variable but not the variable that is observed with error.
 However, identification with nonclassical measurement error requires an additional  exclusion restriction which restricts $W$ to have no information on $Y$ that is not captured in $X^*$, see Assumption 2 $(i)$ in \cite{hu2008}. Interestingly, nonrandom selection as extreme form of nonclassical measurement error simplifies the exclusion restriction imposed on the instruments.

We also emphasize that Assumption \ref{A_instruments} allows for dependence of $D$ and $Y$. Thus,  our approach captures selection on unobservables that do not only stem from latent characteristics in $X^*$ but also from unobservables  that are unexplained by the regression function $g(X^*)$.\footnote{In the model $Y=g(X^*)+U$,  not only $X^*$ but also unobservables $U$ are allowed to directly affect the selection mechanism $D$.} This is an important feature of our framework, as in many economic environments, selection variables can be driven by unobserved individual characteristics. Related literature on nonrandom nonresponse of covariates does not allow for such a general selection mechanism, see \cite{zhao2015}.
We introduce the function class $\mathcal B =\{\phi:\, \Ex|\phi(Y,X^*)|<\infty \text{  and }\inf_{y,x}\phi(y,x)>0\}$.
\begin{assA}[Partial Completeness]\label{A_identification}
For all $\phi\in\mathcal B$ it holds: $\Ex[\phi(Y,X^*)|Y,W]=0$  implies that $\phi$ does not depend on $Y$.
\end{assA}
Assumption \ref{A_identification} is less restrictive than the usual completeness assumption which assumes that $\Ex[\phi(Y,X^*)|Y,W]=0$  implies $\phi(Y,X^*)=0$. This assumption is commonly imposed to ensure identification in nonparametric instrumental variable models, see for instance \cite{NP03econometrica}. In the context of endogenous selection such completeness assumptions were considered by \cite{2010Hault} and \cite{breunig2015}.
 On the other hand, the partial completeness assumption holds under mild functional form assumptions as shown below.

Assumption \ref{A_identification} is automatically satisfied if $\phi$ does not depend on $Y$.
Indeed, if the selection probability $\PP(\Delta=1|Y,X^*)$ does not depend on $Y$ the regression function $g$ is identified as we see in the next subsection and  thus, the partial completeness assumption is well suited for our particular selection problem. Moreover, the next result provides functional form restriction under which partial completeness holds. Throughout the paper, $f_V$ denotes the probability density function of a random variable $V$.
\begin{prop}\label{prop:prim}
Assume that $f_{X^*|YW}=f_{X^*|W}$. Assume that for any $\phi\in\mathcal B$ there exist functions $\phi_1$ and $\phi_2$ such that either
\begin{align}
\phi(Y,X^*)&=\phi_1(Y) \phi_2(X^*)-1\label{eq:mult}
\end{align}
or
\begin{align}
\phi(Y,X^*)&=\psi\big(\phi_1(Y)+\phi_2(X^*)\big)-1\label{eq:tf}
\end{align}
where $\psi$ is an analytic function with $\Ex[\psi^{(j)}\left(\phi_1(y)+\phi_2(X^*)\right)|W]\neq 0$ for some $j\geq 1$ and $\Ex[\sum_{j\geq 1}|\psi^{(j)}\left(\phi_1(y')+\phi_2(X^*)\right)\left(\phi_1(y)-\phi_1(y')\right)^j/(j!)|]<\infty$ for all $y, y'$ in the support of $Y$.
%
Then, Assumption \ref{A_identification}  is satisfied.
\end{prop}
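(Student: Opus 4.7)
The strategy is to plug each of the two functional forms into the restriction $\Ex[\phi(Y,X^*)\mid Y,W]=0$, use the conditional independence $f_{X^*\mid YW}=f_{X^*\mid W}$ to rewrite the conditional expectation in a way that separates the $Y$-dependence, and then show the $Y$-component must be constant. In both cases the aim is to force $\phi_1$ to be constant on the support of $Y$, which makes $\phi$ a function of $X^*$ alone.

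\emph{Case \eqref{eq:mult} (multiplicative form).} Substituting $\phi(Y,X^*)=\phi_1(Y)\phi_2(X^*)-1$ and using $f_{X^*\mid YW}=f_{X^*\mid W}$ to drop $Y$ from the conditioning gives the identity
\[
\phi_1(Y)\,\Ex[\phi_2(X^*)\mid W]=1\quad\text{a.s.}
\]
The left-hand side is a product of a pure function of $Y$ and a pure function of $W$, while the right-hand side is constant; this forces $\phi_1(Y)$ to be constant on the support of $Y$ (and $\Ex[\phi_2(X^*)\mid W]$ to be its reciprocal, in particular finite and nonzero). Hence $\phi$ does not depend on $Y$.

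\emph{Case \eqref{eq:tf} (transformed additive form).} Define $h(t,W):=\Ex[\psi(t+\phi_2(X^*))\mid W]$. Conditional independence gives $\Ex[\psi(\phi_1(Y)+\phi_2(X^*))\mid Y,W]=h(\phi_1(Y),W)$, so the restriction becomes
\[
h(\phi_1(Y),W)=1\quad\text{a.s.}
\]
The domination hypothesis $|\psi^{(j+1)}(\phi_1(y)+\phi_2(X^*))|\le \overline X$ with $\Ex[\overline X]<\infty$ legitimises differentiating under the expectation up to order $j$ (dominated convergence applied to difference quotients), yielding $\partial_t^j h(t,W)=\Ex[\psi^{(j)}(t+\phi_2(X^*))\mid W]$. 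If $\phi_1$ were non-constant, then under the usual regularity (continuity of $\phi_1$ on a continuously distributed $Y$, so that the image of $\phi_1$ contains an interval) the identity $h(\phi_1(y),W)=1$ would propagate to $h(t,W)\equiv 1$ on an open set of $t$, and taking $j$ derivatives would give $\partial_t^j h(t,W)=0$ at points $t=\phi_1(y)$. This directly contradicts the hypothesis $\Ex[\psi^{(j)}(\phi_1(y)+\phi_2(X^*))\mid W]\neq 0$. Hence $\phi_1$ is constant and $\phi$ is a function of $X^*$ only.

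\emph{Main obstacle.} Case \eqref{eq:tf} is the delicate one: the derivative argument requires the image of $\phi_1$ to be rich enough for the equality $h(\phi_1(y),W)=1$ to extend to an open set in $t$, so that $\partial_t^j h$ can actually be read off as zero. Handling this cleanly without an explicit smoothness assumption on $\phi_1$---for instance, by comparing $h(\phi_1(y_1),W)$ and $h(\phi_1(y_2),W)$ at two values with $\phi_1(y_1)\neq \phi_1(y_2)$ via a Taylor remainder and exploiting the domination bound on $\psi^{(j+1)}$ to contradict non-vanishing of the $j$-th derivative---would be the crux of a fully rigorous write-up.
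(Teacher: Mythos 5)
Your treatment of the multiplicative case \eqref{eq:mult} is correct and coincides with the paper's argument: conditional independence reduces the restriction to $\phi_1(y)\,\Ex[\phi_2(X^*)\mid W=w]=1$ for all $y,w$, and comparing two values of $y$ forces $\phi_1$ to be constant.

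The additive case \eqref{eq:tf} has a genuine gap. Your main argument requires the image of $\phi_1$ to contain an open interval (you invoke continuity of $\phi_1$ and a continuously distributed $Y$) so that $h(\cdot,W)\equiv 1$ on an open set and its $j$-th derivative can be read off as zero there. The proposition imposes no such regularity on $\phi_1$ or on the law of $Y$; if $\phi_1$ were, say, a two-valued step function, your argument yields nothing, yet this is precisely the kind of non-constant $\phi_1$ that must be excluded. The step you defer to the ``main obstacle'' paragraph is in fact the entire proof in the paper: fix $y,y'$, use conditional independence to get $\Ex[\psi(\phi_1(y)+\phi_2(X^*))\mid W=w]=\Ex[\psi(\phi_1(y')+\phi_2(X^*))\mid W=w]$, let $\underline j\geq 1$ be the smallest integer with $\Ex[\psi^{(\underline j)}(\phi_1(y')+\phi_2(X^*))\mid W=w]\neq 0$ for some $w$, and Taylor-expand $a\mapsto\Ex[\psi(a+\phi_2(X^*))\mid W=w]$ around $\phi_1(y')$ to order $\underline j$ with a Lagrange remainder, differentiation under the expectation being licensed by dominated convergence via the bound $|\psi^{(j+1)}(\phi_1(y)+\phi_2(X^*))|\leq\overline X$ with $\Ex[\overline X]<\infty$. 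The terms of order below $\underline j$ vanish by minimality of $\underline j$, leaving the leading term $\Ex[\psi^{(\underline j)}(\phi_1(y')+\phi_2(X^*))\mid W=w]\,(\phi_1(y)-\phi_1(y'))^{\underline j}$ plus a dominated remainder of order $\underline j+1$ summing to zero, from which the paper concludes $\phi_1(y)=\phi_1(y')$ for arbitrary $y,y'$. Since you only name this two-point comparison without carrying it out, your proof of case \eqref{eq:tf} is incomplete as written.
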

Proposition \ref{prop:prim} requires that $Y$ does not provide information on $X^*$ that is not contained in the vector $W$. Given this mild restriction we see from Proposition \ref{prop:prim} that functional form restrictions imply the partial completeness assumption to hold.
We emphasize that these functional form restrictions do not imply (semi-)parametric specifications but only impose a nonparametric structure in different forms of separability of $Y$ and $X^*$.
Note that we subtract by one as the exclusion restriction in Assumption \ref{A_instruments} implies the conditional mean restriction $\Ex[\Delta/\PP(\Delta=1|Y,X^*)-1|Y,W]=0$. The selection probability $\PP(\Delta=1|Y,X^*)$ is not necessarily point identified through the former conditional mean restriction given Assumption \ref{A_identification}. Equation \eqref{eq:tf} provides a  generalized additive type restriction on the functions of interest. 
The restriction $\Ex[\psi^{(j)}\left(\phi_1(y)+\phi_2(X^*)\right)|W]\neq 0$, for some integer $j\geq 1$ and all $y$,  is satisfied by a broad class of link functions $\psi$. This condition holds, in particular, when the first derivative $\psi'\left(\phi_1(y)+\phi_2(X^*)\right)$ is strictly positive for all $y$ in the support of $Y$. Consequently, we conclude from Proposition \ref{prop:prim} that partial completeness holds under a generalized additive restriction when the link function $\psi$ coincides with any analytic, cumulative distribution function supported on the whole real line.
Another case where $\Ex[\psi^{(j)}\left(\phi_1(y)+\phi_2(X^*)\right)|W]\neq 0$, for some $j\geq 1$,  holds is when $\psi$ is a non-constant polynomial function.

While Proposition \ref{prop:prim} provides a broad class of functions which satisfy the partial completeness assumption, partial completeness can fail for functions which vary in $y$ with vanishing means and when instruments are independent of $X^*$. This is demonstrated in the following illustrative example.
\begin{example}
Consider the case where $\Var(X^*)=1$,  $\Ex[X^*]=0$, and, as in Proposition 2.1,  $f_{X^*|YW}=f_{X^*|W}$. In addition, we assume that the instrumental variables $W$ have no information on $X^*$, i.e., $f_{X^*|W}=f_{X^*}$.
Consider the function $\phi(y,x)=x^2-yx-1$, then we obtain:
\begin{align*}
\Ex[\phi(Y,X^*)|Y=y,W=w]&=\Var(X^*)-y\Ex[X^*]-1\\
&=\Var(X^*)-1\\
&=0,
\end{align*}
but $\phi$ varies in $y$ and hence, partial completeness does not hold.
\end{example}

\begin{assA}\label{A_pos}
 The selection probability  $\PP(\Delta=1|Y,X^*)$ is bounded away from zero uniformly over its support.
\end{assA}
Assumption \ref{A_pos} can rule out a selection when it is a deterministic function of $Y$ and $X^*$, such as certain indicator functions. We also emphasize that Assumption \ref{A_pos} can be relaxed if we are only interested in point evaluation at some $x_0$ in the support of $X$. That is, identification of $\Ex[Y|X^*=x_0]$ requires only $\PP(\Delta=1|Y=y,X^*=x_0)>0$ uniformly over all $y$ in the support of $Y$.

\subsection{Nonparametric Identification via FPW Weighting}\label{ss_ident}
In this subsection, we establish identification of the nonparametric regression function $g(x)=\Ex[Y|X^*=x]$.
We show that the function $g$ can be identified via a fractional probability weight (FPW). In addition, we show that the FPW is identified by making use of instrumental variables $W$ which satisfy the previous assumptions.
In the next result, we document that the regression function $g$ can be written as
\begin{align}\label{eq:ident:g}
g(x)=\Ex\left[Y\, \omega(Y, x)\Big|\Delta=1, X^*=x\right]
\end{align}
where the fractional probability weight (FPW) function $\omega$ is given by
\begin{align}\label{def:omega}
\omega(y, x)=\frac{\PP(\Delta=1|X^*=x)}{\PP(\Delta=1|Y=y, X^*=x)}.
\end{align}
Further, we establish identification of the nonparametric regression function $g$.
\begin{theo}\label{thm:ident}
Let Assumptions \ref{A_instruments}--\ref{A_pos} be satisfied. Then, the FPW  function $\omega$ is identified and thus,  identification of the regression function $g$ follows  through \eqref{eq:ident:g}.
\end{theo}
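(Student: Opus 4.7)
My plan is to split the argument into two parts: a direct verification of the representation \eqref{eq:ident:g} via Bayes' rule, and identification of $\omega$ by combining Assumption \ref{A_instruments} with partial completeness.

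Set $\pi(y,x):=\PP(\Delta=1|Y=y,X^*=x)$ and $\pi_0(x):=\PP(\Delta=1|X^*=x)$, both bounded away from zero by Assumption \ref{A_pos}. Bayes' rule gives $f_{Y|\Delta=1,X^*=x}(y)=\pi(y,x)f_{Y|X^*=x}(y)/\pi_0(x)$, equivalently $f_{Y|X^*=x}(y)=\omega(y,x)f_{Y|\Delta=1,X^*=x}(y)$. Multiplying by $y$ and integrating yields \eqref{eq:ident:g}. Integrating the same identity without the factor $y$ gives the normalization $\Ex[\omega(Y,x)|\Delta=1,X^*=x]=1$. Writing $\zeta(y,x):=1/\pi(y,x)$ and $\omega=\pi_0\,\zeta$, this rearranges to
\begin{equation*}
\omega(y,x)=\frac{\zeta(y,x)}{\Ex[\zeta(Y,x)|\Delta=1,X^*=x]},
\end{equation*}
a formula whose denominator is observable and which is invariant under the rescaling $\zeta\mapsto c(x)\zeta$.

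For the identification step, Assumption \ref{A_instruments} gives $\Ex[\Delta|Y,X^*,W]=\pi(Y,X^*)$, so $\Ex[\Delta\zeta(Y,X^*)|Y,W]=1$. Since $X=X^*$ on $\{\Delta=1\}$ this is an observable conditional moment restriction: $\Ex[\Delta\zeta(Y,X)-1|Y,W]=0$. Let $\tilde\zeta>0$ be any other solution. Iterated expectations combined with Assumption \ref{A_instruments} yields $\Ex[\pi(Y,X^*)\tilde\zeta(Y,X^*)-1|Y,W]=0$, and the integrand $\phi(y,x):=\pi(y,x)\tilde\zeta(y,x)-1$ satisfies $\phi+1>0$ by Assumption \ref{A_pos} and positivity of $\tilde\zeta$, placing it in the class on which Assumption \ref{A_identification} operates. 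Partial completeness then forces $\phi$ to be independent of $y$, so $\tilde\zeta(y,x)=c(x)\zeta(y,x)$ for some positive function $c$. Substituting into the normalization formula, the factor $c(x)$ cancels in numerator and denominator, so $\tilde\zeta$ produces the same $\omega$. Hence $\omega$ is identified, and identification of $g$ then follows from \eqref{eq:ident:g}.

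The main obstacle is that neither $\pi$ nor $\pi_0$ is separately identified: the moment restriction together with partial completeness only pins down $\zeta=1/\pi$ up to a multiplicative gauge $c(x)$. The decisive observation is that the FPW $\omega=\pi_0/\pi$ is gauge-invariant, because the normalization $\Ex[\omega(Y,x)|\Delta=1,X^*=x]=1$ built into \eqref{eq:ident:g} exactly cancels any common $c(x)$. Making this argument rigorous requires $\phi+1>0$ in order to invoke partial completeness, which is where the positivity hypothesis Assumption \ref{A_pos} does essential work.
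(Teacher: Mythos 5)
Your proposal is correct and follows essentially the same route as the paper's proof: Step 1 is the same Bayes'-rule derivation of \eqref{eq:ident:g} and of the normalization $\Ex[\omega(Y,x)|\Delta=1,X^*=x]=1$, and Step 2 is the paper's argument in the reciprocal parametrization $\zeta=1/\pi$ — any observationally equivalent solution of the moment restriction differs from the truth by a multiplicative function of $x$ alone (the paper writes this as $\varphi=\phi\cdot(\psi(X^*)+1)$), which cancels in the normalized FPW formula. The only (shared) imprecision is that the function fed to partial completeness is of the form ``positive quantity minus one'' rather than literally lying in $\mathcal B$ as defined; the paper's own proof glosses over the same point.
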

The previous result shows that the FPW function given in \eqref{def:omega} is point identified although the selection probabilities conditional on latent variables are only partially identified.
This is an implication of partial completeness imposed in Assumption \ref {A_identification}.
Corollary \ref{coro:fpw}  presents a useful property of the FPW function $\omega$. This result is an immediate consequence of the proof of Theorem \ref{thm:ident} and hence we omit its proof.
\begin{coro}\label{coro:fpw}
Let  Assumption \ref{A_pos} be satisfied. Then, for the FPW  function $\omega$ we obtain
\begin{align*}
\Ex[\omega(Y, x)|\Delta=1,X^*=x]=1.
\end{align*}
\end{coro}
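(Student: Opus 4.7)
The plan is to compute the conditional expectation directly by unfolding the definition of $\omega$ against the conditional density of $Y$ given $\{\Delta=1, X^*=x\}$. I would first express that conditional density via Bayes' rule as
\[
f_{Y\mid \Delta=1, X^*=x}(y) = \frac{\PP(\Delta=1\mid Y=y, X^*=x)\, f_{Y\mid X^*=x}(y)}{\PP(\Delta=1\mid X^*=x)},
\]
which is legitimate because Assumption \ref{A_pos} keeps the denominator bounded away from zero.

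Next, I would substitute this into the defining integral
\[
\Ex[\omega(Y,x)\mid \Delta=1, X^*=x] = \int \frac{\PP(\Delta=1\mid X^*=x)}{\PP(\Delta=1\mid Y=y, X^*=x)}\, f_{Y\mid \Delta=1, X^*=x}(y)\, dy.
\]
The key observation is that the factor $\PP(\Delta=1\mid X^*=x)$ in the numerator of $\omega$ cancels with the denominator coming from Bayes' rule, and likewise $\PP(\Delta=1\mid Y=y, X^*=x)$ in the denominator of $\omega$ cancels with the corresponding factor in the numerator of the Bayes expression. What remains is $\int f_{Y\mid X^*=x}(y)\, dy = 1$.

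There is essentially no obstacle here: the whole computation is a one-line cancellation once Bayes' rule is applied. The only subtlety worth flagging is that Assumption \ref{A_pos} is what guarantees that $\omega(y,x)$ is well defined and finite on the support of $(Y,X^*)$ given $\Delta=1$, so that both the substitution and the ensuing cancellation are valid. Since Theorem \ref{thm:ident} already establishes identification of $\omega$ as a function, no further measurability or integrability issues need to be addressed in the argument.
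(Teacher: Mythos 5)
Your proof is correct and is essentially the paper's own argument: the paper derives the corollary by running the identity $f_{Y|X^*}=f_{\Delta|X^*}\,f_{Y|\Delta X^*}/f_{\Delta|YX^*}$ through the integral (the same Bayes-rule cancellation you perform, just read in the opposite direction), noting that replacing $y$ by $1$ in the display \eqref{eq:ident:proof} gives $\int f_{Y|X^*}(y|x)\,dy=1$. The only cosmetic remark is that your appeal to Theorem \ref{thm:ident} for identification is unnecessary — the corollary is a population-level identity requiring only Assumption \ref{A_pos} — but this does not affect the validity of the argument.
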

In empirical applications also the dependent variable might be selectively missing.  We discuss this case in Appendix \ref{sec:mis:dep}. The remark below highlights the difference between selectively missing covariates which requires FPW and selectively missing dependent variables which requires inverse probability weighting.

\begin{rem}[Relation to selectively missing depend variables]\label{rem:fpw:ipw}
Consider the case of selectively missing dependent variables, that is, $Y^*$ is only observed if $D^Y=1$ and otherwise missing, while $X$ is always observed. Further, assume that $\PP(D^Y=1|Y^*,X)=\PP(D^Y=1|Y^*)$.
In this case, (as shown by  \cite{breunig2015}) we obtain
\begin{align*}
\Ex[Y^*|X]=\Ex[Y\psi(Y)|X]
\end{align*}
where $Y=Y^* D^Y$ and $\psi(\cdot)=1/\PP(D^Y=1|Y^*=\cdot)$. Consequently, the conditional expectation with latent $Y^*$ corresponds to an inverse probability weighted version of observed counterparts. This correction differs from  fractional probability weighting, as considered in this paper, which is required for selectively missing covariates.
\end{rem}
While Remark \ref{rem:fpw:ipw} highlights the difference between fractional and inverse probability weighting, we note that for unconditional moment estimation both approaches are feasible (see Section \ref{sec:ipw:fpw}). Still our FPW approach is preferable in this case due to a less restrictive identification requirement and finite sample improvements as shown in Section \ref{sec:ipw:fpw}.
\section{The FPW Series Estimator and its Asymptotic Properties}\label{s_regression_estimation}
This section consists of four subsections. In Subsection \ref{ss_semiparametric_estimation}, we derive the FPW series estimator which stems from our constructive identification result. Subsection \ref{subsec:rate} provides the rate of convergence of the estimator. We establish pointwise asymptotic normality of the FPW series estimator in Subsection \ref{subsec:inf} and provide asymptotic validity of uniform bootstrap confidence bands in Subsection \ref{subsec:ucb}.

\subsection{Estimation}\label{ss_semiparametric_estimation}
 We define the conditional selection probability by
\begin{equation}\label{par:restr}
  \varphi(y,x):=\PP(\Delta=1|Y=y,X^*=x).
\end{equation}
In particular, the selection probability conditional on the latent regressors $X^*$ is determined by
\begin{align*}
 \PP(\Delta=1|X^*=x )
&=\left(\Ex \left[\frac{1}{ \varphi(Y,x)}\Big|\Delta=1, X^*=x\right]\right)^{-1},
\end{align*}
see the proof of Theorem \ref{thm:ident}.
We thus obtain the following expression for the FPW function $\omega$:
\begin{align*}
\omega(y, x)=\left(\Ex \left[\frac{\varphi(y,x)}{ \varphi(Y,x)}\Big|\Delta=1, X^*=x\right]\right)^{-1}.
\end{align*}

 We estimate the regression function $g(x)=\Ex[Y|X^*=x]=\Ex[Y\omega(Y,x)|\Delta=1, X^*=x]$ using a plug-in series least squares estimator.
To do so,  we introduce a vector of basis functions $p^K(\cdot)=\big(p_1(\cdot)\dots,p_K(\cdot)\big)'$;  $K=K(n)$  is an integer which increases with the sample size $n$. We further introduce the $n\times K$--matrix  $\mathbf  X=\big(\Delta_1 p^K(X_1),\dots,\Delta_n p^K(X_n)\big)$.
We estimate the FPW function $\omega$ via
\begin{align*}
\widehat \omega(y, x;\phi)=\left(p^K(x)'\,(\mathbf  X'\mathbf X)^{-1}\,\sum_{i=1, \Delta_i=1}^n p^K(X_i) \frac{\phi(y, x)}{\phi(Y_i, X_i)}\right)^{-1}.
\end{align*}
It is  common in the context of inverse probability weighting, to normalize the weights to sum up to one. In our context, we normalize $\omega$ as follows.
Employing Corollary \ref{coro:fpw}, i.e., $\Ex[\omega(Y, x)|\Delta=1,X^*=x]=1$, we obtain
\begin{align*}
\Ex[\Delta\, p^K(X)p^K(X)']&=\Ex[\Delta\, p^K(X^*)p^K(X^*)']\\
&= \Ex[\Delta\, p^K(X)\, \omega(Y,X)\, p^K(X)'].
\end{align*}
Replacing $\Ex[\Delta\, p^K(X)\, \omega(Y,X)\, p^K(X)']$ by the empirical matrix $\mathbf X'\, \boldsymbol \omega(\widehat \varphi)\,\mathbf X$ we obtain the FPW series estimator of the regression function $g$ given by
\begin{equation}\label{gen:def:est}
\widehat g(x)\equiv \,
p^K(x)'\,\big(\mathbf X'\, \boldsymbol \omega(\widehat\varphi)\,\mathbf X\big)^{-1}\,\sum_{i=1, \Delta_i=1}^n p^K(X_i) Y_i\, \widehat\omega(Y_i, X_i;\widehat\varphi)
\end{equation}
where $\boldsymbol \omega(\phi)=\text{diag}\big(\widehat \omega(Y_1,X_1;\phi),\dots,\widehat \omega(Y_n,X_n;\phi)\big)$. Here, $\widehat\varphi$ is a restricted sieve minimum distance estimator of the selection probability $\varphi$ given as follows. We have the conditional moment restriction induced by the exclusion restriction imposed in Assumption \ref{A_instruments}, that is,
\begin{align}\label{cond:eq}
\Ex\left[\frac{\Delta}{\varphi(Y,X)}\Big|Y,W\right]=1.
\end{align}
(Here, we use that $\varphi(Y,X^*)=\varphi(Y,X)$ whenever $\Delta=1$.)
Consider  a vector of tensor product basis functions $q^L(\cdot,\cdot)=\big(q_1(\cdot,\cdot)\dots,q_L(\cdot,\cdot)\big)'$ used to approximate the conditional mean in \eqref{cond:eq};  $L=L(n)$  is an integer which increases with the sample size $n$. We estimate the conditional mean $m(y,w; \phi)=\Ex[\Delta/\phi(Y,X)-1|Y=y,W=w]$ by the series least squares estimator
\begin{align*}
\widehat m(y,w; \phi):=q^L(y,w)\Big(\sum_{i=1}^nq^L(Y_i,W_i)q^L(Y_i,W_i)'\Big)^{-1}\sum_{i=1}^n q^L(Y_i,W_i)\Big(\frac{D_i}{\phi(Y_i,X_i)}-1\Big)
\end{align*}
then we the constrained  sieve minimum distance estimator
\begin{align}\label{est:prob}
\widehat\varphi=\underset{\phi\in \mathcal B_L}{\mathrm{argmin}}\sum_{i=1}^n \widehat m^2(Y_i,W_i; \phi)
\end{align}
where $\mathcal B_L=\{\phi=1/(\beta'q^L):\, \min_{y,x}\beta'q^L(y,x)\geq 1\}$ following \cite{chenpouzo2012}. We may assume that $\mathcal B_L$ becomes dense in $\mathcal B$ as $L$ tends to infinity.
Note that in the case of nonparametric estimation, any estimator of $\varphi$ has a slow rate of convergence since the conditional mean restriction yields in general to a so called ill-posed inverse problem, see \cite{NP03econometrica} and \cite{BCK07econometrica}. In our case, $\varphi$ is not identified through the conditional mean equation \eqref{cond:eq} but we can always ensure uniqueness of the estimator, for instance, by considering the minimal norm estimator of equation \eqref{cond:eq}.
Finally, note that FPW series estimation is convenient since control variables which enter the model linearly can be simply included in the empirical matrix $\mathbf X$. This allows to treat partially linear models as considered in our empirical applications in Section~\ref{s_empiricalapplication}.

\subsection{Rate of Convergence}\label{subsec:rate}
We now introduce some assumptions. The support of $X$ is denoted by $\mathcal X$. We also introduce the $L^2_X$--norm $\|\phi\|_X=\sqrt{\Ex\phi^2(X)}$ and $\|\cdot\|$ denotes the Euclidean norm. We make use of the notation $U=Y-g(X^*)$ and   $h(x,\phi)=\Ex \left[1/\phi(V)|\Delta=1, X^*=x\right]$. Recall that $\omega(V,\varphi)=(\varphi(V)h(X,\varphi))^{-1}$ is identified due to Theorem \ref{thm:ident}.
 We introduce function class
\begin{align*}
\cH=\set{\psi:\,\psi(\cdot,\phi)\in L_X^2\text{ and }\psi(\cdot,\phi)\geq 1\text{ for all }\phi\in\mathcal B}.
\end{align*}
Note that $h(\cdot,\varphi)\in\cH$ due to Assumption \ref{A_pos}.  In order to achieve the rate of convergence, we have to restrict the complexity of the function classes $\mathcal B$ and $\mathcal H$, by imposing a finite entropy integral. We first provide primitive conditions by imposing smoothness restrictions on the function classes under consideration.

For any vector $a=(a_1,\dots, a_d)$ of $d$ integers, where $d=\dim(X,Y)$, define $D^a = \partial^{|a|}/\partial a_{a_1}\dots \partial x_{a_d}$, where $|a|=\sum_{l=1}^d a_l$. Let $\mathcal R_d$ be a bounded, convex subset of $\mathbb R^d$ with nonempty interior. For a function $\phi: \mathcal R_d\to\mathbb R$ and some $\alpha>0$, let $\underline\alpha$ be the largest integer smaller than $\alpha$, and
\begin{align*}
\|\phi\|_{\infty,\alpha}=\max_{|a|\leq \underline\alpha} \sup_v |D^a\phi(v)|+\max_{|a|= \underline\alpha} \sup_{v\neq v'}
\frac{|D^a\phi(v)-D^a\phi(v')|}{\|v-v'\|^{\alpha-\underline\alpha}}.
\end{align*}
We further restrict the function class $\mathcal B$ in the next assumption to contain only functions with bounded $\|\cdot\|_{\infty,\alpha}$ norm and bounded supremum norm $\|\cdot\|_\infty$. 
To do so, we introduce
\begin{align}\label{def:B:smooth}
\mathcal B_\alpha= \{\phi: \mathcal R_d\to\mathbb R\text{ continuous with }\inf_{v\in\mathcal R_d} \phi(v)>0 \text{  and }\|\phi\|_{\infty,\alpha}<C\}
\end{align}
for some $\alpha>0$ and constant $C>0$.
Similarly for $\mathcal H$ we introduce a smooth function class
\begin{equation*}
\mathcal H_\alpha=\{\psi:\mathcal R_{d_x}\to\mathbb R\text{ continuous with} \,\|\psi(\cdot,\phi)\|_{\infty,\alpha}<C\text{ and }\psi(\cdot,\phi)\geq 1\text{ for all }\phi\in\mathcal B\}
\end{equation*}
for some $\alpha>0$ and constant $C>0$, which is a subset of $\mathcal H$ for any constant $C>0$.

\begin{assA}\label{Ass_bas}
(i) We observe a sample $((\Delta_1,Y_1, X_1, W_1),\dots,(\Delta_n,Y_n,X_n,W_n))$ of independent and identical distributed (i.i.d.) copies of $(\Delta,Y,X,W)$ where $X=\Delta X^*$.
 (ii) There exist a constant $C>0$ and a sequence of positive integers $K:=K(n)$ satisfying  $\sup_{x\in\cX}\|p^K(x)\|^2\leq C K$ such that $K^2/n=o(1)$.
 (iii) The smallest eigenvalue of $\Ex[\Delta\, p^K(X)p^K(X)']$ is bounded away from zero uniformly in $K$. (iv) Let $\Ex[U^2|X^*]<\infty$, $\Ex[g^2(X^*)]<\infty$, and $\|\gamma'p^K-g\|_X=O(K^{-\alpha/d_x})$ for some $\gamma\in\mathbb R^K$.
 (v) 
 $\mathcal B\subset\mathcal B_\alpha$ with $\alpha>d/2$ and $\mathcal H\subset\mathcal H_{\alpha'}$ with $\alpha'>d_x/2$.
\end{assA}
Assumption \ref{Ass_bas} $(ii)-(iii)$ restricts the magnitude of the approximating functions $\{p_j\}_{j\geq1}$ and imposes nonsingularity of their second moment matrix.
Assumption \ref{Ass_bas} $(ii)$ holds for instance for polynomial splines, Fourier series and wavelet bases.
Assumption \ref{Ass_bas} $(iii)$ is satisfied if $p^K$ is a vector of orthonormal basis functions and the probability density function of $X^*$ given $D=1$ is uniformly bounded away from zero on its support.
Assumption \ref{Ass_bas} $(iv)$ determines the sieve approximation error which in turn characterizes the bias of the estimated regression function $g$; see also \cite{chen2007} for further discussions on sieve bases.
Assumption \ref{Ass_bas} $(v)$ imposes smoothness conditions on the function classes $\mathcal B$ and $\cH$ in order to reduce the complexity of these classes.

The next result establishes the rate of convergence for the FPW series estimator $\widehat g$.
\begin{theo}\label{thm:est:cond:par}
Let Assumptions \ref{A_instruments}--\ref{Ass_bas}  be satisfied. Then, we have
\begin{equation*}
\|\widehat g-g\|_X^2=O_p\Big(\max\big(K^{-2\alpha/{d_x}}, \frac{K}{n} \big)\Big).
\end{equation*}
\end{theo}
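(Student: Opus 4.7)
My approach is the standard sieve least-squares bias/variance split; the crucial input is the identity from Corollary~\ref{coro:fpw}, which makes FPW-weighted residuals conditionally mean-zero, so the estimator behaves like ordinary series regression of a weighted response. Let $\gamma\in\mathbb R^K$ be the sieve coefficient from Assumption~\ref{Ass_bas}(iv), giving $\|p^K(\cdot)'\gamma-g\|_X=O(K^{-\alpha/d_x})$. Setting $\widehat Q=\mathbf X'\boldsymbol\omega(\widehat\varphi)\mathbf X/n$ and $\widehat b=n^{-1}\sum_{i:\Delta_i=1}p^K(X_i)\,Y_i\,\widehat\omega(Y_i,X_i;\widehat\varphi)$, I decompose
\[\widehat g(x)-g(x)=p^K(x)'\widehat Q^{-1}\bigl(\widehat b-\widehat Q\gamma\bigr)+\bigl(p^K(x)'\gamma-g(x)\bigr),\]
where the second summand contributes $O(K^{-2\alpha/d_x})$ to $\|\widehat g-g\|_X^2$ directly.

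\textbf{Oracle step.} Replacing $\widehat\omega(\cdot,\cdot;\widehat\varphi)$ by the true $\omega$, split $Y_i-p^K(X_i)'\gamma=U_i+[g(X_i^*)-p^K(X_i)'\gamma]$. The key identity is
\[\Ex\bigl[\omega(Y,X^*)\,U\bigm|\Delta=1,X^*\bigr]=\Ex[\omega(Y,X^*)Y\mid\Delta=1,X^*]-g(X^*)\,\Ex[\omega(Y,X^*)\mid\Delta=1,X^*]=0,\]
immediate from \eqref{eq:ident:g} and Corollary~\ref{coro:fpw}. Hence the oracle noise $n^{-1}\sum_{\Delta_i=1}p^K(X_i)\omega(Y_i,X_i)U_i$ is a sum of conditionally centered vectors whose variance matrix, controlled by $\Ex[U^2|X^*]<\infty$ (Assumption~\ref{Ass_bas}(iv)) and uniform positivity of $\varphi$ (Assumption~\ref{A_pos}), has trace $O(K/n)$; the approximation residual $g(X_i^*)-p^K(X_i)'\gamma$ adds $O_p(K^{-2\alpha/d_x})$. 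A standard matrix concentration bound, combined with Corollary~\ref{coro:fpw} (which moves $\omega$ through the expectation to identify $\Ex\widehat Q=Q:=\Ex[\Delta p^K(X)p^K(X)']$) and the eigenvalue condition in Assumption~\ref{Ass_bas}(iii), gives $\|\widehat Q-Q\|_{\mathrm{op}}=o_p(1)$ and permits inversion, delivering the oracle rate $O_p(K/n+K^{-2\alpha/d_x})$ in $\|\cdot\|_X^2$.

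\textbf{Plug-in step and main obstacle.} Writing $\widehat\omega(y,x;\phi)=[\phi(y,x)\,\widehat h(x;\phi)]^{-1}$, where $\widehat h(\cdot;\phi)$ is the series least-squares estimator of $h(\cdot,\phi)\in\cH$, and using $1/a-1/b=(b-a)/(ab)$ together with Assumption~\ref{A_pos} to control denominators, the plug-in task reduces to uniform bounds on $\widehat\varphi-\varphi$ and $\widehat h(\cdot;\widehat\varphi)-h(\cdot,\varphi)$ over $\mathcal B_\alpha$ and $\cH_{\alpha'}$. Assumption~\ref{Ass_bas}(v) ($\alpha>d/2$, $\alpha'>d_x/2$) gives both classes finite bracketing entropy, so uniform convergence of the series estimator $\widehat h$ is routine, and a uniform rate for the constrained sieve minimum distance estimator in~\eqref{est:prob} follows in the spirit of \cite{chenpouzo2012}. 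The genuine difficulty is that \eqref{cond:eq} identifies $\varphi$ only partially and is generically ill-posed, so $\widehat\varphi$ itself may converge slowly; the saving observation is that $\omega$ depends on $\varphi$ only through the conditional integral $h(x,\varphi)$ and pointwise values, both of which are well-posed linear functionals of $\varphi$. Propagating these uniform bounds through the outer series projection onto $p^K(X)$ smooths away the slow directions and keeps the plug-in contribution to $\widehat b-\widehat Q\gamma$ at order $O_p(K^{-\alpha/d_x}+\sqrt{K/n})$, which is absorbed into the claimed rate. Making this smoothing rigorous is the hardest part of the argument.
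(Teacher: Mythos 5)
Your decomposition and your oracle step line up with the paper's proof: the same bias term $\|\gamma'p^K-g\|_X=O(K^{-\alpha/d_x})$, the same use of Corollary~\ref{coro:fpw} to show $\Ex[\Delta(Y-\gamma'p^K(X))\,\omega(Y,X)p^K(X)]=0$ so that the oracle noise term is $O_p(K/n)$, and the same control of $\widehat Q$ via Assumption~\ref{Ass_bas}(iii). The gap is in your plug-in step, and it is not a technicality you can defer. You propose to control $\widehat\omega-\omega$ through ``uniform bounds on $\widehat\varphi-\varphi$.'' But $\varphi$ is only \emph{partially} identified by the conditional moment restriction \eqref{cond:eq} (the paper emphasizes this repeatedly): the sieve minimum distance estimator $\widehat\varphi$ has no reason to converge to $\varphi$ at any rate, or at all --- it may converge to a different element of the identified set $\mathcal I=\{\phi\in\mathcal B:\,\Ex[\Delta/\phi(Y,X^*)|Y,W]=1\}$. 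Only the product $\varphi(\cdot)h(\cdot,\varphi)$, equivalently the FPW function $\omega$, is identified (this is the content of Step 2 of the proof of Theorem~\ref{thm:ident}). Your remark that pointwise values of $\varphi$ are ``well-posed linear functionals'' is therefore wrong on two counts: pointwise evaluation is not a well-posed functional in an ill-posed inverse problem, and the target itself is not unique. Your closing sentence, ``Making this smoothing rigorous is the hardest part of the argument,'' concedes exactly the step that the theorem requires.

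The paper closes this gap by a different mechanism that uses no rate for $\widehat\varphi$ whatsoever. Writing $\widehat\omega(v,\phi)=\big(\phi(v)\widehat h(x,\phi)\big)^{-1}$, it bounds the contribution of $\widehat\omega(\cdot,\widehat\varphi)-\omega$ to the score vector by the supremum over \emph{all} $(\phi,\psi)\in\mathcal B\times\mathcal H$ of the empirical process built from the functions $h_j$ in \eqref{def:h_j}, and controls that supremum with the maximal inequalities of van der Vaart and Wellner (Theorems 2.14.2 and 2.14.5), using the finite bracketing entropy integrals of $\mathcal B_\alpha$ and $\mathcal H_{\alpha'}$ guaranteed by Assumption~\ref{Ass_bas}(v) (i.e., $\alpha>d/2$ and $\alpha'>d_x/2$). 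Summing the squared envelopes $\|F_j\|_V^2$ over $j=1,\dots,K$ then gives $O_p(K/n)$ directly, both for the term multiplying $Y_i$ and for the term multiplying $\gamma'p^K(X_i)$. To complete your proof you would need to replace your plug-in step by this (or an equivalent) uniform-in-class empirical process argument; a rate-of-convergence argument for $\widehat\varphi$ itself is simply not available in this model.
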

 From Theorem \ref{thm:est:cond:par} we observe that the estimator $\widehat g$ attains the usual bias and variance term in integrated mean square error for nonparametric series regression. If $K$ is chosen to level variance and bias, i.e., $K\sim n^{d_x/(2\alpha+d_x)}$,  then the convergence rate given in Theorem \ref{thm:est:cond:par} coincides with $n^{-2\alpha/(2\alpha+d_x)}$. Consequently, we obtain the optimal nonparametric rate of convergence as in the situation where the covariates $X$ are completely observed, that is, we do not obtain a slower rate of convergence due to the estimation of the FPW function. This result is obtained by the regularity conditions imposed in Assumption \ref{Ass_bas} $ (v)$, which restrict the complexity of the underlying fractional probability function $\omega$.

\subsection{Pointwise Inference}\label{subsec:inf}
This subsection discusses the inference of the estimator of regression function $g$ evaluated at some point of the support of $X$. In applications, such asymptotic distribution results can be useful to construct approximate confidence intervals. Before stating the result we make the following additional assumptions, in particular, with respect to the error term $U=Y-g(X^*)$.

\begin{assA}\label{A:inf:par} (i) $\Ex[|U\Delta\, \omega(Y,X^*)|^4]$ is bounded from above and $\Var(U\Delta\,\omega(Y,X^*)|X^*)$ is uniformly bounded away from zero. (ii) For some $x$ in the support of $X$ it holds $\sum_{j=1}^K|p_j(x)|=O(\|p^K(x)\|)$.
\end{assA}

The bounds imposed in Assumption \ref{A:inf:par} $(i)$ are not stronger than the one imposed in \cite{Newey1997}. We also note that it is possible to relax these conditions as noted by \cite{belloni2015} or \cite{chen2015optimal}. Assumption \ref{A:inf:par} $(ii)$ is a condition on the basis functions and satisfied for B-splines or wavelets, see Appendix E of \cite{chen2018optimal}.

To obtain asymptotic normality of our estimator we require a normalization factor. Therefore, we introduce the sieve variance given by
\begin{equation*}
 \textsl{v}_K(x)=p^K(x)'\Ex\Big[p^K(X^*)\Var(U \Delta\,\omega(Y,X^*)|X^*)\,p^K(X^*)'\Big]\, p^K(x).
\end{equation*}
In contrast to the usual series regression in \cite{Newey1997}, we see that the sieve variance also contains the FPW function $\omega$. As $\omega$ can take values smaller than one, the sieve variance for our FPW series estimator can be even smaller than the one associated to the usual series estimator. This is in contrast to estimators based on weighting via inverse selection probabilities that always lead to larger sieve variances, see \cite{breunig2015} for selective outcomes or \cite{das2003} for propensity score weighting.
We replace the sieve variance $ \textsl{v}_K(x)$ by the estimator
\begin{equation*}
\widehat{\textsl{v}}_K(x)=p^K(x)'(\mathbf X'\mathbf X)^{-1}n^{-1}\sum_{i=1, \Delta_i=1}^np^K(X_i)\widehat U_i^2\,\Delta_i \,\widehat\omega^2(Y_i,X_i;\widehat\varphi)\,p^K(X_i)'\,(\mathbf X'\mathbf X)^{-1} p^K(x)
\end{equation*}
where $\widehat U_i=Y_i-\widehat g(X_i)$.
We now establish the asymptotic distribution of the estimator $\widehat g$ evaluated at some point $x$ in the support of $X$. Similarly, asymptotic distribution results for linear functionals of $g$ can be obtained. We introduce the supremum norm $\|\phi\|_\infty=\sup_{x\in\mathcal X}|\phi(x)|$.
\begin{theo}\label{thm:inference:par}
  Let Assumptions \ref{A_instruments}--\ref{A:inf:par} be satisfied. If for some $x$ in the support of $X$ it holds
\begin{equation}\label{cond:inference:par}
 n\,\|\gamma'p^K-g\|_\infty^2=o(\textsl{v}_K(x))
\end{equation}
then we have
    \begin{align*}
      \sqrt{n/\, \widehat{\textsl{v}}_K(x)}\,\big(\widehat g(x) - g(x)\big)\stackrel{d}{\rightarrow}\mathcal N(0,1).
    \end{align*}
\end{theo}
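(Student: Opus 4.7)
The strategy is the standard series-regression normality argument (as in \cite{Newey1997}), adapted to absorb the estimated fractional weights. Write $\widehat g(x)=p^K(x)'\widehat\beta$ with $\widehat\beta=(\mathbf X'\boldsymbol\omega(\widehat\varphi)\mathbf X)^{-1}\sum_{i=1,\Delta_i=1}^n p^K(X_i)Y_i\widehat\omega(Y_i,X_i;\widehat\varphi)$, and set $\beta_K=Q_K^{-1}\Ex[\Delta\, p^K(X^*)g(X^*)]$ with $Q_K=\Ex[\Delta\, p^K(X^*)p^K(X^*)']$. Using Corollary \ref{coro:fpw}, namely $\Ex[\omega(Y,x)|\Delta=1,X^*=x]=1$, together with the identification identity \eqref{eq:ident:g}, I would first verify the population identities
\[Q_K=\Ex[\Delta\, p^K(X)\omega(Y,X)p^K(X)']\quad\text{and}\quad \Ex[\Delta\, p^K(X)Y\omega(Y,X)]=\Ex[\Delta\, p^K(X^*)g(X^*)],\]
so that $p^K(x)'\beta_K$ is the correctly weighted sieve projection of $g$ at $x$. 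Decomposing $\widehat g(x)-g(x)=p^K(x)'(\widehat\beta-\beta_K)+(p^K(x)'\beta_K-g(x))$, the approximation-bias term is $o(\sqrt{\textsl{v}_K(x)/n})$ by condition \eqref{cond:inference:par} together with Assumption \ref{Ass_bas}(iv).

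For the stochastic term I would linearize by splitting $Y_i-p^K(X_i)'\beta_K=U_i+r_K(X_i)$ with $r_K=g-(p^K)'\beta_K$, so that
\[\widehat\beta-\beta_K=\widehat Q^{-1}\Bigl\{S_n(\widehat\varphi)+R_n(\widehat\varphi)\Bigr\},\]
where $\widehat Q=n^{-1}\mathbf X'\boldsymbol\omega(\widehat\varphi)\mathbf X$, $S_n(\phi)=n^{-1}\sum_i \Delta_i p^K(X_i)\widehat\omega(Y_i,X_i;\phi)U_i$, and $R_n$ is defined analogously using $r_K$ in place of $U_i$. The pivotal step is to replace $\widehat\varphi$ by $\varphi$: I would argue that $S_n(\widehat\varphi)-S_n(\varphi)$, the analogous increment for $\widehat Q$, and $R_n(\widehat\varphi)$ are all $o_p(\sqrt{\textsl{v}_K(x)/n})$. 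This combines the uniform convergence rate of the restricted sieve estimator $\widehat\varphi$ (whose complexity is controlled by Assumption \ref{Ass_bas}(v)) with a Neyman-orthogonality cancellation: the moment condition \eqref{cond:eq}, $\Ex[\Delta/\varphi(Y,X)-1|Y,W]=0$, together with the symmetric appearance of $\widehat\omega$ in both the numerator and the normalization of $\widehat\beta$ imply that first-order perturbations in $\varphi$ offset each other, leaving a remainder quadratic in the $\widehat\varphi$-error. This produces the oracle expansion
\[\widehat\beta-\beta_K=Q_K^{-1}\,n^{-1}\sum_{i=1}^n \Delta_i p^K(X_i)\omega(Y_i,X_i)U_i+o_p\bigl(\sqrt{\textsl{v}_K(x)/n}\bigr),\]
after which asymptotic normality of $\alpha_n'\cdot n^{-1/2}\sum_i Z_i$ with $Z_i=\Delta_i p^K(X_i)\omega(Y_i,X_i)U_i$ and $\alpha_n=Q_K^{-1}p^K(x)/\sqrt{\textsl{v}_K(x)}$ follows from the Lindeberg--Feller CLT. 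The mean-zero property $\Ex[U\Delta\omega(Y,X^*)|X^*]=0$ comes from Corollary \ref{coro:fpw} combined with \eqref{eq:ident:g}, and Lindeberg's condition is verified using Assumption \ref{A:inf:par}(i)--(ii) together with Assumption \ref{Ass_bas}(ii)--(iii).

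It remains to establish $\widehat{\textsl{v}}_K(x)/\textsl{v}_K(x)\stackrel{p}{\to}1$. For this I would combine (a) convergence of $n^{-1}\mathbf X'\mathbf X$ to $Q_K$ in the relevant operator norm via a matrix Bernstein argument using Assumption \ref{Ass_bas}(ii)--(iii), (b) uniform consistency of $\widehat\omega$ inherited from the rate for $\widehat\varphi$, and (c) an $L^2$ control on $\widehat U_i-U_i$ supplied by Theorem \ref{thm:est:cond:par}. Slutsky's theorem then delivers the claim. The hard part is the second step: showing that the first-order impact of $\widehat\varphi-\varphi$ on $\widehat\beta$ is negligible, even though $\widehat\varphi$ is a nonparametric estimator of an object identified only through a potentially ill-posed conditional moment equation. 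The argument must exploit the particular architecture in which $\widehat\omega$ enters symmetrically through the numerator and the weighting matrix, so that perturbations propagate via \eqref{cond:eq} and are absorbed without amplifying the ill-posedness.
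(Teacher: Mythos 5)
Your skeleton --- the oracle decomposition into a main term with the true weight $\omega$, a weight-estimation term, and a sieve-approximation term, followed by a Lindeberg--Feller CLT and consistency of $\widehat{\textsl{v}}_K(x)$ --- matches the paper's proof. The genuine gap is in the pivotal step: negligibility of $n^{-1}\sum_{i,\Delta_i=1}p^K(X_i)U_i\big(\widehat\omega(Y_i,X_i;\widehat\varphi)-\omega(Y_i,X_i)\big)$. You attribute this to a Neyman-orthogonality cancellation between the numerator and the normalizing matrix, claiming the remainder is quadratic in the $\widehat\varphi$-error. That cancellation does not hold here. Differentiating the population map $\phi\mapsto \big(\Ex[\Delta p^K\omega(\cdot,\phi)p^{K\prime}]\big)^{-1}\Ex[\Delta p^K Y\omega(\cdot,\phi)]$ at $\varphi$ in a direction $\delta$ leaves, up to a sieve-bias term, the contribution $\Ex\big[\Delta\, p^K(X^*)\,U\,\eta(Y,X^*)\big]$ where $\eta$ is the induced perturbation of $\omega$. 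Because $\eta$ depends on $Y$ while $U$ is only mean-independent of $X^*$ (not of $Y$), the conditional expectation $\Ex[\varphi(Y,X^*)U\eta(Y,X^*)\,|\,X^*]$ does not vanish: the normalization $\Ex[\omega(Y,x)|\Delta=1,X^*=x]=1$ is indeed insensitive to $\phi$, but the $U$-weighted moment is not. A first-order bias of order $\|\widehat\varphi\widehat h-\varphi h\|_\infty$ therefore survives, and since $\widehat\varphi$ comes from an ill-posed conditional moment equation you cannot expect this to be $o_p(n^{-1/2})$.

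The paper closes this step by a different mechanism, which you should adopt. First, Assumption \ref{A:inf:par}$(i)$ gives the lower bound $\textsl{v}_K(x)\gtrsim\|p^K(x)\|^2$, so after dividing by $\sqrt{\textsl{v}_K(x)}$ the weight-estimation term only needs to be $o_p(\|p^K(x)\|/\sqrt n)$, not $o_p(n^{-1/2})$. Second, by sup-norm consistency of $\widehat\varphi(\cdot)\widehat h(\cdot,\widehat\varphi)$ one may restrict to the shrinking class $\mathcal A_n=\{(\phi,\psi):\|\phi(\cdot)\psi(\cdot,\phi)-\varphi(\cdot)h(\cdot,\varphi)\|_\infty\leq r_n\}$ with $r_n=o(1)$; the bracketing-entropy conditions of Assumption \ref{Ass_bas}$(v)$ make the coordinate classes manageable, a maximal inequality bounds each coordinate by the envelope norm, and the envelopes shrink with $r_n$, yielding a bound of order $r_n\sum_{j\leq K}|p_j(x)|=o(\|p^K(x)\|)$ by Assumption \ref{A:inf:par}$(ii)$. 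No rate for $\widehat\varphi$ and no orthogonality are needed. The remaining parts of your plan --- the population identities from Corollary \ref{coro:fpw}, the mean-zero property $\Ex[U\Delta\,\omega(Y,X^*)|X^*]=0$, the Lindeberg condition via the fourth-moment bound and $K^2=o(n)$, the treatment of the bias term under \eqref{cond:inference:par}, and the consistency of $\widehat{\textsl{v}}_K(x)$ --- are in line with the paper's argument.
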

Condition \eqref{cond:inference:par} requires the estimator of $g$ to be undersmoothed. This ensures that the sieve approximation bias in the second step estimation procedure becomes asymptotically negligible.
Theorem \ref{thm:inference:par} can be also used to construct pointwise confidence intervals for $g(x)$ but can also be extended to construct uniform bootstrap confidence bands, as the following remark illustrates.

\subsection{Uniform Bootstrap Confidence Bands}\label{subsec:ucb}
This section provides a  bootstrap procedure to construct uniform confidence bands for $g$ and establishes asymptotic validity of it.
Let $(\varepsilon_1,\ldots,\varepsilon_n)$ be a bootstrap sequence of i.i.d. random variables drawn independently of the data $\{(\Delta_i,Y_i,X_i,W_i)\}_{1\leq i\leq n}$, with $\Ex[\varepsilon_{i}] = 0$, $\Ex[\varepsilon_{i}^2] = 1$, with bounded moments.
Common choices of distributions for $\varepsilon_i$ include the standard Normal, Rademacher, and the two-point distribution of \cite{mammen1993}. Further, $\mathbb{P}^*$ denotes the probability distribution of the bootstrap innovations $(\varepsilon_1,\ldots,\varepsilon_n)$ conditional on the data.
We introduce  the bootstrap process
\begin{align*}
 \mathbb Z^B(x) =  \frac{p^K(x)'(\bold X'\boldsymbol\omega(\widehat\varphi)\bold X/n)^{-1}}{\sqrt{n\widehat{\textsl{v}}_K(x)}}\sum_{i=1, \Delta_i=1}^np^K(X_i)\Big(Y_i\,\widehat \omega(Y_i,X_i;\widehat \sol) - \widehat g(X_i)\Big)\varepsilon_{i}.
\end{align*}
Under regularity conditions it can be shown that the bootstrap process provides a uniform approximation of the influence function of the estimator $\widehat g$ and thus, can be used to construct uniform confidence bands (see also \cite{chen2018optimal}).

For the implementation of a $100(1-\alpha)\%$ uniform confidence bands, we compute the critical value $z^B_{1-\alpha}$ as the $(1-\alpha)$ quantile of $\sup_{x}| \mathbb Z^B(x)|$ for a large number of independent bootstrap draws. The resulting $100(1-\alpha)\%$ uniform confidence band is then given by
\begin{align*}
x\mapsto \left[\widehat g(x)-z^B_{1-\alpha}\sqrt{\frac{\widehat{\textsl{v}}_K(x)}{n}}, \widehat g(x)-z^B_{1-\alpha}\sqrt{\frac{\widehat{\textsl{v}}_K(x)}{n}}\right].
\end{align*}

In the following, we assume that $\mathcal C$ is  a closed subset of $\mathbb R^{d_x}$.
Let $\delta(\cdot,\cdot)$ be the standard deviation semimetric on $\mathcal{C}$ of the Gaussian Process $\mathbb Z(x) = p^K(x)'\mathcal Z/\sqrt{\textsl{v}_K(x)}$ with $\mathcal Z\sim \mathcal N(0,\Sigma)$ and $\Sigma=\Ex[p^K(X^*)\Var(U \Delta\,\omega(Y,X^*)|X^*)\,p^K(X^*)']$ defined as $\delta(x_1,x_2) = (\Ex[(\mathbb Z(x_1) - \mathbb Z(x_2))^2])^{1/2}$, see \textit{e.g.} \cite[Appendix A.2]{Vaart2000}. Further, we introduce the notation $N(\varepsilon,\mathcal{F},\|\cdot\|_{\mathcal F})$ for the $\epsilon$-entropy of $\mathcal{F}$ with respect to a norm $\|\cdot\|_{\mathcal F}$.
\begin{assA}\label{Ass:uniform}
  (i) $\mathcal{C}$ is compact and $(\mathcal{C},\delta(\cdot,\cdot))$ is separable for each $n\geq 1$. (ii) There exists a sequence of finite positive integers $c_n$ such that
$ 1 + \int_0^{\infty}\sqrt{\log N(\epsilon, \mathcal{C},\delta(\cdot,\cdot))}d\epsilon = O (c_n)$.
(iii) There exists a sequence of positive integers $r_n$ with $r_n = o(1)$ such that $K^{5/2}= o(r_n^3\sqrt{n})$, $r_n c_n=O(1)$,  and
\begin{equation*}
  K^2\sqrt{\frac{\log(n)}{n}}+ K n^{-1/4}\log(n)\Big(c_n+\sup_{x\in\mathcal C}\sqrt{\frac{n}{\textsl{v}_K(x)}}|\gamma'p^K(x)-g(x)|\Big)=o(r_n).
   \end{equation*}
\end{assA}
Assumption \ref{Ass:uniform} is similar to
\cite[Assumption 6]{chen2018optimal} who establish asymptotic validity of uniform confidence bands in nonparametric instrumental variable estimation.
Assumption \ref{Ass:uniform} $(ii)$ is a mild regularity assumption, see also \cite[Remark 4.2]{chen2018optimal} for sufficient conditions.

The next theorem establishes asymptotic validity of the bootstrap for constructing uniform confidence bands for the regression function $g$.
Below, ${\PP}^*$ denotes a probability measure conditional on the data $\{(\Delta_i, Y_i,X_i,W_i)\}_{i=1}^n$.
\begin{theo}\label{thm:bands}
  Let the assumptions of Theorem \ref{thm:inference:par} and Assumption \ref{Ass:uniform} hold.
  Then, we have
  \begin{equation*}
      \sup_{s\in\mathbb{R}}\left|\PP\left(\sup_{x\in\mathcal{C}}\left|\sqrt{\frac{n}{\widehat{\textsl{v}}_K(x)}}\Big(\widehat g(x)-g(x)\Big)\right|\leq s\right) - \PP^*\left(\sup_{x\in\mathcal{C}}\left|\mathbb Z^B(x)\right| \leq s \right)\right| = o_p(1).
  \end{equation*}
\end{theo}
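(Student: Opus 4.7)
The plan is to follow the strong-approximation plus anti-concentration template of Chernozhukov, Chetverikov and Kato, in the form used for series estimators by \cite{chen2018optimal} (which Assumption \ref{Ass:uniform} is modeled on). The argument has three main pieces: (i) a uniform Bahadur-type linearization of both $\sqrt{n/\widehat{\textsl{v}}_K(x)}\bigl(\widehat g(x)-g(x)\bigr)$ and the bootstrap process $\mathbb{Z}^B(x)$; (ii) Yurinskii/CCK couplings of the resulting normalized sums to a common centered Gaussian process $\mathbb{Z}(x)=p^K(x)'\mathcal{Z}/\sqrt{\textsl{v}_K(x)}$ with $\mathcal{Z}\sim\mathcal{N}(0,\Sigma)$; and (iii) a Gaussian anti-concentration inequality that turns sup-norm coupling into Kolmogorov-distance matching of the two suprema.

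First I would establish the uniform linearization
\begin{equation*}
\sqrt{\frac{n}{\textsl{v}_K(x)}}\bigl(\widehat g(x)-g(x)\bigr)=\frac{p^K(x)'\Sigma_K^{-1}}{\sqrt{n\textsl{v}_K(x)}}\sum_{i=1}^n\Delta_i p^K(X_i)\bigl(Y_i\omega(Y_i,X_i)-g(X_i)\bigr)+R_n(x),
\end{equation*}
where $\Sigma_K=\Ex[\Delta p^K(X^*)p^K(X^*)']$ and $\sup_{x\in\mathcal{C}}|R_n(x)|=o_p(r_n)$. The remainder $R_n$ absorbs: (a) the matrix error $(\mathbf{X}'\boldsymbol{\omega}(\widehat\varphi)\mathbf{X}/n)^{-1}-\Sigma_K^{-1}$, whose population limit is $\Sigma_K^{-1}$ by Corollary \ref{coro:fpw} and is controlled by matrix concentration under $K^2/n=o(1)$; (b) the sieve approximation bias, made negligible on $\mathcal{C}$ by the undersmoothing condition in Assumption \ref{Ass:uniform} (iii); and (c) the plug-in contribution of $\widehat\varphi$ to $\widehat\omega(\cdot,\cdot;\widehat\varphi)$, handled by linearizing $\phi\mapsto\omega(\cdot,\cdot;\phi)$ around $\varphi$ and applying a maximal inequality over the smooth class $\mathcal{B}_\alpha\cap\mathcal{H}_{\alpha'}$ from Assumption \ref{Ass_bas} (v). A parallel expansion of the bootstrap process produces
\begin{equation*}
\mathbb{Z}^B(x)=\frac{p^K(x)'\Sigma_K^{-1}}{\sqrt{n\textsl{v}_K(x)}}\sum_{i=1}^n\Delta_ip^K(X_i)\bigl(Y_i\omega(Y_i,X_i)-g(X_i)\bigr)\varepsilon_i+R_n^B(x),
\end{equation*}
with $\sup_{x\in\mathcal{C}}|R_n^B(x)|=o_p(r_n)$ in $\PP^*$-probability, after exchanging $\widehat{\textsl{v}}_K$ for $\textsl{v}_K$ via uniform consistency of the variance estimator.

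Second, I would apply Yurinskii-type Gaussian couplings to each leading linear statistic. Under the moment bounds in Assumption \ref{A:inf:par} (i), the basis condition in Assumption \ref{A:inf:par} (ii), and the growth and entropy conditions in Assumption \ref{Ass:uniform}, there exist Gaussian vectors $\mathcal{Z}_n\sim\mathcal{N}(0,\Sigma)$ and a conditional Gaussian vector $\mathcal{Z}_n^B$ with $\mathcal{Z}_n^B\mid\text{data}\sim\mathcal{N}(0,\widehat\Sigma)$ such that the two leading sums above are uniformly $o_p(r_n)$-close to $p^K(x)'\mathcal{Z}_n/\sqrt{\textsl{v}_K(x)}$ and $p^K(x)'\mathcal{Z}_n^B/\sqrt{\textsl{v}_K(x)}$, respectively. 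Consistency of $\widehat\Sigma$ in operator norm, which follows from step (c) together with the fourth-moment bound, implies that the conditional law of $\sup_{x\in\mathcal{C}}|p^K(x)'\mathcal{Z}_n^B/\sqrt{\textsl{v}_K(x)}|$ tracks that of $\sup_{x\in\mathcal{C}}|\mathbb{Z}(x)|$. Finally, the CCK anti-concentration inequality for suprema of separable Gaussian processes, applied with the entropy bound in Assumption \ref{Ass:uniform} (ii), converts the $o_p(r_n)$ coupling remainders into the claimed $o_p(1)$ bound on the Kolmogorov distance stated in the theorem.

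The hardest piece is step (c): the pilot estimator $\widehat\varphi$ solves an ill-posed sieve minimum-distance problem and therefore converges at a slow, potentially only logarithmic, rate. The argument is rescued by the fact that $\omega$ depends on $\varphi$ only through the bounded conditional-expectation smoother $h(x,\varphi)=\Ex[1/\varphi(Y,X)\mid\Delta=1,X^*=x]$; combined with the bounded-entropy smoothness imposed in Assumption \ref{Ass_bas} (v) for both $\mathcal{B}$ and $\mathcal{H}$, the relevant empirical-process term sits in a Donsker class and is controlled by a $\sqrt{n}$-rate maximal inequality, so that the plug-in remainder in $\widehat g$ does not inherit the slow rate of $\widehat\varphi$. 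With step (c) in place, the remaining arguments closely parallel those in \cite{chen2018optimal}.
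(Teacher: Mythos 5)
Your overall architecture --- a uniform Bahadur linearization of both $\sqrt{n/\widehat{\textsl{v}}_K(x)}(\widehat g(x)-g(x))$ and $\mathbb Z^B(x)$, Yurinskii couplings of the two leading sums to Gaussian vectors distributed as $\mathcal N(0,\Sigma)$ and, conditionally on the data, $\mathcal N(0,\widehat\Sigma)$, followed by a Gaussian anti-concentration step --- is exactly the paper's proof (its Steps 1--3), and your diagnosis of the hardest point is also the paper's: the slow, ill-posed rate of $\widehat\varphi$ is neutralized because the relevant empirical-process terms range over the entropy-bounded classes of Assumption \ref{Ass_bas}~$(v)$ and hence obey a root-$n$ maximal inequality, so the plug-in step does not contaminate the first-order expansion.

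There is, however, one concrete slip in your linearization. You take the leading influence function to be $\Delta_i\bigl(Y_i\omega(Y_i,X_i)-g(X_i)\bigr)$ and relegate the weighting-matrix error $(\mathbf X'\boldsymbol\omega(\widehat\varphi)\mathbf X/n)^{-1}-\Sigma_K^{-1}$ entirely to an $o_p(r_n)$ remainder. That matrix fluctuation is \emph{not} negligible at first order: to leading order it contributes $-p^K(x)'\bigl(\widehat Q(\widehat\varphi)-I_K\bigr)\gamma$, a mean-zero sum whose normalized version is $O_p(1)$, not $o_p(1)$. The paper instead centers through the exact identity $\gamma'p^K(x)=p^K(x)'\widehat Q(\widehat\varphi)^{-1}\widehat Q(\widehat\varphi)\gamma$, so the residual inside the sum becomes $\widehat\omega(Y_i,X_i;\widehat\varphi)\bigl(Y_i-\gamma'p^K(X_i)\bigr)$ and the correct influence function is $U_i\Delta_i\,\omega(Y_i,X^*_i)$. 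Your candidate differs from this by $\Delta_i\,g(X_i)\bigl(\omega(Y_i,X_i)-1\bigr)$, which by Corollary \ref{coro:fpw} has conditional mean zero but non-degenerate variance whenever selection depends on $Y$; consequently your leading term does not have variance $\Sigma=\Ex\bigl[p^K(X^*)\Var(U\Delta\,\omega(Y,X^*)|X^*)\,p^K(X^*)'\bigr]$, the coupling to $\mathcal N(0,\Sigma)$ fails as stated, and the normalization by $\widehat{\textsl{v}}_K(x)$ (which targets exactly this $\Sigma$) would not deliver a unit-variance limit. Once you replace your step (a) by the self-centering decomposition (the paper's terms $I(x)$ through $V(x)$ in Step 1), the remainder of your plan goes through as described.
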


\section{Monte Carlo simulation}\label{s_Monte_Carlo_simulation}
In this section, we study the finite-sample performance of our estimator by presenting the results of a Monte Carlo simulation. We first focus on the estimation of nonlinear conditional moments, then we turn to linear regression. We perform $1000$ Monte Carlo replications in each experiment and the sample size is $n=1000$.

\subsection{Nonlinear Regression}
We consider the estimation of the regression function $g$ under the following simulation design.
The data are generated by $W=\Phi(\xi)$ and $X^*=\Phi(\chi)$ where $\chi=\rho\,\xi+\sqrt{1-\rho^2}\,\nu$
and  $(\xi,\nu)'\sim\cN(0,I_2)$.
Here,  $\rho$ characterizes the strength of the instruments and is varied in the experiments below.
Further, we draw $Y$ from the model
 \begin{equation*}
Y=g(X^*)+U
\end{equation*}
where $g(x)=\Phi\big(c_g(x-0.5)\big)$ with standard normal distribution function $\Phi$, a constant $c_g\in\{5,20\}$,  and $U\sim \cN(0,\sigma_U^2)$, where $\sigma_U^2\in\{0.5,1\}$.
\begin{figure}[ht!]
		\includegraphics[width=16.5cm]{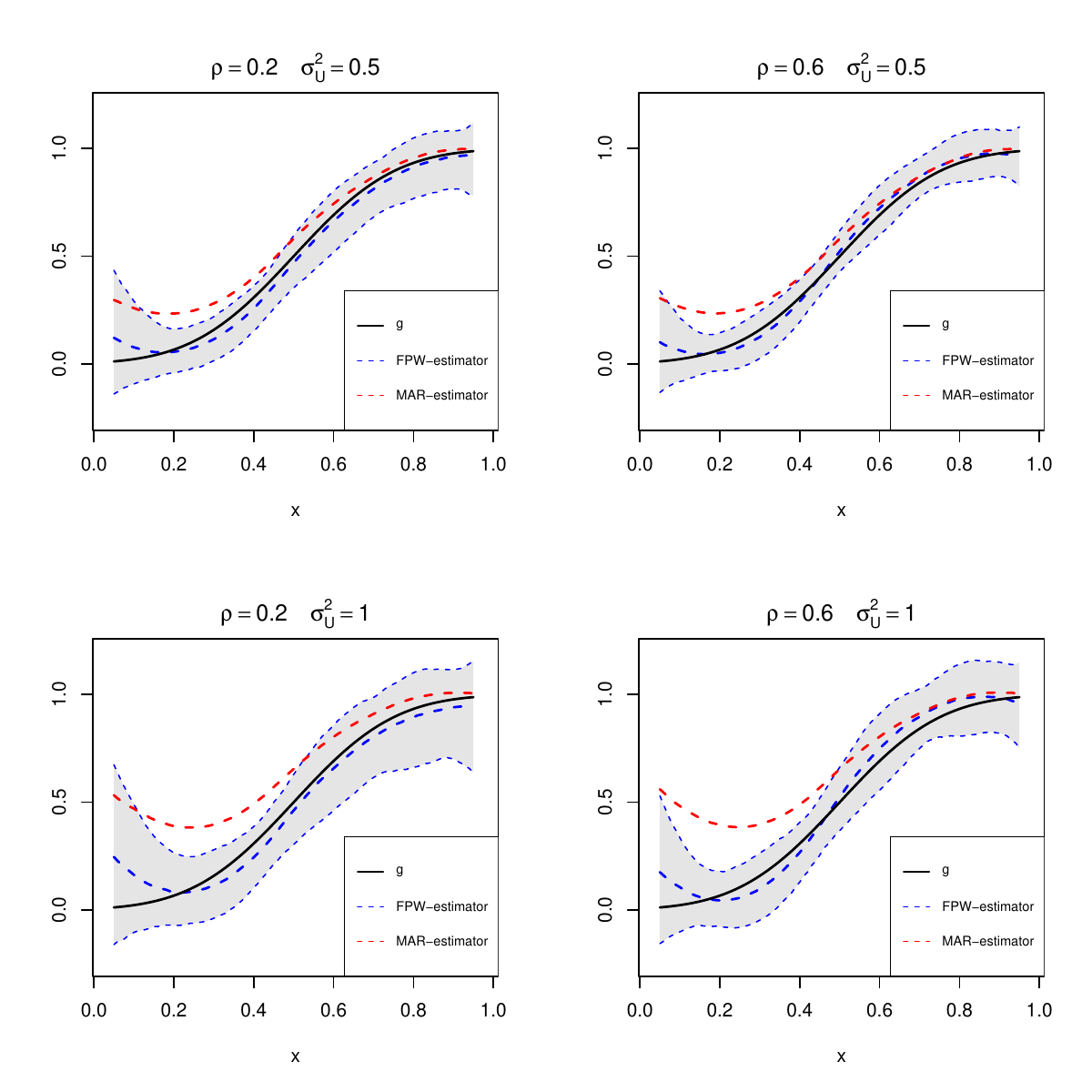}
	\caption{\small The regression function $g$ with $c_g=5$, the median of $\widehat g$ (blue) with 95\% confidence intervals and an estimator under MAR assumption (red).}\label{condmean}
\end{figure}
We generate realizations of the selection variable $\Delta$ from the Bernoulli distribution
\begin{align}\label{sel:prob:sim}
\Delta\sim\textsc{Bernoulli}\big(\Phi(1+\chi+U/2)\big).
\end{align}
Consequently, the selection probability is a function of the latent covariates $X^*$ and also unobservables $U$.
The selection probability $\sol$ is estimated  using the sieve minimum distance procedure in \eqref{est:prob} with tensor product of quadratic B-splines and zero knots (hence $L=9$).
We estimate the function $g$ by using the FPW series estimator $\widehat g$ given in \eqref{gen:def:est}.  As basis functions we use quadratic B-splines  with 2 knots (hence $K=5$) when $c_g=5$  (see Figure \ref{condmean}) and quadratic B-splines  with 7 knots (hence $K=10$) when $c_g=20$ (see Figure \ref{condmean-nonsmooth}). For the B-splines of $X$ note that we choose the interior knots located at the quantiles of the observed subset of $\{X_i\}_{D_i=1,1\leq i\leq n}$.

Figures \ref{condmean} and \ref{condmean-nonsmooth} depict the median of the FPW series estimator $\widehat g$ together with its 95\% pointwise confidence bands and a series estimator under the missing at random (MAR) assumption based on listwise deletion under different simulation designs. We vary the parameters $\rho$ and $\sigma_U$; the first row shows results with $\sigma_U^2=0.5$, the second row with $\sigma_U^2=1$, the first column with $\rho=0.2$, and the second column with $\rho=0.6$.
In all cases the median of the FPW series estimator is close to the true regression function $g$ and the MAR series estimator is severely biased.
From Figures   \ref{condmean} and \ref{condmean-nonsmooth} we see that the strength of instruments only has a moderate influence on the performance of the FPW series estimator. This is in line with our theoretical results that the asymptotic performance of the estimator is not driven by the correlation of the instruments to the latent covariates. On the other hand, we see that the variance of estimation becomes much larger as $\sigma_U^2$ increases from $0.5$ to $1$. For larger values of $\sigma_U$ the problem of selection on unobservables becomes more severe. In this case, the confidence intervals of the FPW series estimator become larger but also the bias of the MAR series estimator increases.
\begin{figure}[ht!]
		\includegraphics[width=16.5cm]{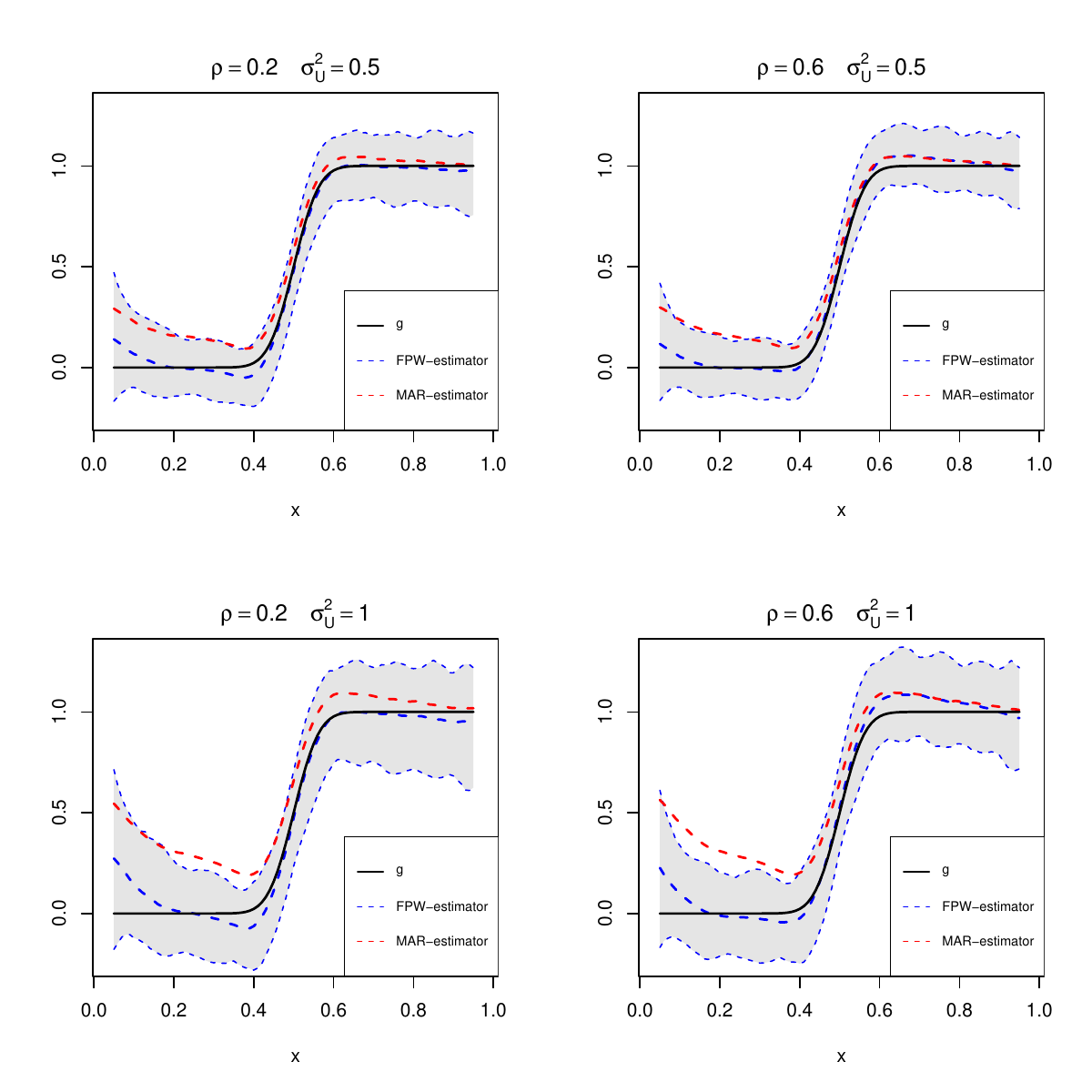}
	\caption{\small The regression function $g$ with $c_g=20$, the median of $\widehat g$ (blue) with 95\% confidence intervals and an estimator under MAR assumption (red).}\label{condmean-nonsmooth}
\end{figure}

\subsection{Linear Regression and Comparison to Inverse Probability Weighting (IPW)}\label{sec:ipw:fpw}
For estimators based on unconditional moments, an alternative approach to FPW is given by IPW. Yet this subsection demonstrates that the  FPW approach leads to more accurate estimation results even in linear regression models.

We generate the data as described in the previous subsection. As we are interested in the unconditional mean $\Ex[YX^*]$ we could also make use of IPW. Indeed, making use of the notation  $\widetilde \omega(y,x)=1/\PP(\Delta=1|Y=y,X^*=x)$ we obtain by the law of iterated expectations that
\begin{align*}
\Ex[Y X^*]&=\Ex[Y X^* \Delta/\PP(\Delta=1|Y,X^*)]\\
&=\Ex[Y X\widetilde \omega(Y,X)].
\end{align*}
Alternatively, we can apply the FPW function  $\omega(y, x)=\PP(\Delta=1|X^*=x)/\PP(\Delta=1|Y=y, X^*=x)$ and obtain by our nonparametric identification results that
\begin{align*}
\Ex[Y X^*]=\Ex[Y X\omega(Y,X)].
\end{align*}
Estimating the unconditional mean by FPW has the advantage over IPW that identification of the inverse selection probability is more restrictive than identification of the FPW function $\omega$. That is, for identification of the IPW the usual completeness assumption is required. In addition, we demonstrate in the following the finite sample properties of both approaches in a finite sample analysis. To do so, we consider the linear model
\begin{align}\label{sim:lin:mod}
Y=\beta_0+\beta_1 X^*+U
\end{align}
where $\beta_0=1$ and $\beta_1=3$. The data is generated as described in the previous subsection with $\rho=0.2$ and $\sigma_U=1$. Below, we analyze the absolute median bias and the coverage at the 95\% nominal coverage rate for the FPW estimator, the IPW estimator, the MAR estimator based on listwise deletion and the estimator when there is no missing data, that is, $D\equiv 1$. We estimate the weights for FPW and IPW nonparametrically as described in the previous subsection. The FPW and IPW estimators coincide then with weighted ordinary least squares (OLS) estimators.

\begin{table}[ht]	
\renewcommand{\arraystretch}{1.3}
  \begin{center}
\begin{tabular}{lcccc}
\hline\hline
 & FPW & IPW & MAR & $D\equiv 1$\\
\hline
Abs. median bias$(\widehat\beta_0)$ & 0.039 &  0.056&0.403&0.006\\
Abs. median bias$(\widehat\beta_1)$ & 0.112 &  0.147&0.447&0.008\\
 Coverage for $\beta_0$ & 0.873 &  0.799&0.004&0.943\\
Coverage for $\beta_1$ & 0.798 &  0.728&0.085&0.934\\
  \hline
  \end{tabular}
    \caption{{\small Absolute median bias and empirical coverage at 95\% nominal coverage rate for the linear model \eqref{sim:lin:mod} for the FPW estimator, the IPW estimator,  the MAR estimator, and the OLS estimator without missingness.}}\label{table:uncond}
  \end{center}
  \end{table}

In Table \ref{table:uncond} we compare the IPW and FPW estimators with the OLS estimator under the MAR hypothesis and the OLS estimator when there is no missing data. The second and third row show the absolute median bias of the estimators of the intercept and the slope parameter. We see that the FPW estimator has smaller median bias than the IPW estimator for both parameters. Not surprisingly, the bias dramatically increases when we ignore selection and consider the MAR estimator.
The last two rows depict the coverage of the confidence interval for the intercept and the slope parameter.  We see that the FPW estimator has more accurate coverage than the IPW estimator.
Yet there is undercoverage of the FPW estimator which is due to the severity of the selectivity of the nonresponse mechanism. Note that the 95\% confidence interval of the  MAR estimator contains the true intercept only in 4 out of 1000 Monte Carlo Iterations. If we relax the severity of the selectivity then the coverage of the FPW estimator is more accurate. For instance, if in \eqref{sel:prob:sim} the variable $U/2$ is replaced by $U/3$ then the empirical coverage of the FPW estimator for $\beta_1$ increases from 0.798 to 0.887.

\section{Empirical Applications}\label{s_empiricalapplication}

In the final section of the paper we apply the developed methodology to study to relevant economic questions. First, we focus on the association between income and health. In the second application we analyze how income affects the demand for housing.

\subsection{Application I: the association between income and health}\label{s_health}

As mentioned in the introduction, a large body of literature has  documented a positive correlation between income and health, see e.g. \cite{DeatPax_98}.\footnote{In general, it is difficult to identify the causal effect of income on health, therefore most studies focus on the association on income and health. We follow these studies. Notable exceptions are studies that focus on the effect of income or wealth shocks on health, see e.g. \cite{Schwand_18}. } However, in general these studies are based on survey data in which income, health information and further demographic variables are self reported. As shown e.g. in \cite{breunig2017} information on income in surveys is likely to suffer from nonrandom selection which might result in biased estimates of the association between income and health.

The empirical analysis is based on linked data from the German sample of the Survey of Health, Aging and Retirement in Europe (SHARE, Wave 5, collected in 2013) and the German pension insurance. SHARE is a multinational survey of the elderly population aged 50 and above in Europe, for more information see \cite{abs_2013}. The survey includes standard demographic characteristics and self reported information about different income measures and various subjective and objective health outcomes. The key variables for our analysis are individual income and health outcomes. We use a broad definition of income. For non-retired individuals the income includes labor earnings, income from self employment and transfers for unemployed. For retired individuals the income is composed of own pensions, and if applicable widowers pension and additional labor earnings. The health status is described by an objective measurement of the hand grip strength. Previous studies have documented that hand grip strength is a good measure of physical functioning and a predictor of morbidity, disability and mortality, see e.g. \cite{rantanen1999midlife}, \cite{bohannon2015muscle}, or \cite{dodds2014grip}. From the grip strength we construct a binary variable which indicates bad health status if the grip strength is below the 25th percentile.

For our analysis we exploit a specific feature of the data which allows us to link a subsample\footnote{The linkage of the data requires the consent of the individuals, about 2/3 of individuals agreed to the linkage.} of the survey data to administrative data of the German pension insurance. Thus, in addition to the self-reported income information which might suffer from nonrandom nonresponse the data includes official information about pension entitlements.  For pensioners we observe the full pension entitlements, i.e. number of pension points, they have earned during their working life; for non retired individuals we observe the entitlements they have collected so far. Pension entitlements are a deterministic function of the full individual earnings history. The earnings history is a good predictor of current income, however, it contains no direct information about the response behavior for current income. Therefore, this information allows us to construct a suitable instrument to account for potential nonrandom nonresponse of the current self reported income. In fact this instrument is superior to instruments based on self reported lagged employment outcomes which are often used, see e.g. \cite{breunig2017}. First, the instrument is not affected by transitory shocks since it combines information about the full working life instead of using information of only one period. Second, self reported past information might as well suffer from nonresponse. This is not the case for the information about pension entitlements in administrative data.

In the empirical analysis we concentrate on $3340$ individuals which are younger than 80 years and who have agreed to the linkage of the survey data and the information of the pension insurance. Out of this sample, $12.34\%$ do not respond to the income information question.\footnote{In our sample only 4.8\% do not provide information about grip strength - we assume that this information is missing at random. Using the test of missing (completely) at random by \cite{breunig2017} we obtain the value of the test statistic 0.053 with $0.05$-- level critical value of 0.10 and hence, we fail to reject the MAR hypothesis.} Table \ref{table:sum:health} provides summary statistics of the relevant variables for the analysis.
\bigskip

\begin{table}[h]
\renewcommand{\arraystretch}{1.3}
\centering
{\small
\begin{tabular}{lrrrrc}
\hline\hline
 & 1st Qu. & Median & Mean & 3rd Qu.   \\ \hline
\text{Grip strength}    & 28.00  & 35.00 & 36.53 &45.00  \\
\text{Bad health}    & 0  & 0 & 0.24 &0  \\
$\log(\text{Income per Year})$  & 8.96 & 9.48 & 9.37 & 9.95  \\
Number of pension points   & 19.90 & 33.20 & 33.59 & 47.00 \\
Age in years    &56.00 &  62.00 &  62.39  & 69.00

\\ \hline
\end{tabular}%
}

\caption{\small This table provides summary statistics of the relevant variables for the analysis. The sample includes $3340$ individuals which are aged between 50 and 80 years. Grip strength is measured in kilograms. The value of a pension point in 2013 amount to 24.92 (East Germany) and 28.07 (West Germany). Bad health is defined when grip strength is below the 25th percentile. }\label{table:sum:health}
\end{table}

To quantify the association between income and health we use the following semiparametric model 
\begin{align}\label{est_eq}
Badhealth_i= g\big(\log(Income_i^*)\big)+\alpha_0 Age_i +\beta_0 Gender_i + U_i,
\end{align}
where the function $g$ and the parameters $\alpha_0$ and $\beta_0$ are unknown. We assume that $U_i$ is conditional mean independent of the explanatory variables, $\log(Income_i^*)$, $Age_i$, and $Gender_i$. We apply the FPW estimator as described in the previous section, i.e., we estimate the nonparametric selection probability  using quadratic B-spline basis functions for least square approximations. Specifically, the selection probability $\sol$ is estimated  using the sieve minimum distance procedure described in \eqref{est:prob} with tensor product of quadratic B-splines and zero knots, where we additionally control for age and gender.
We estimate the function $g$ using the FPW series estimator $\widehat g$ given in \eqref{gen:def:est} using quadratic B-splines  with 1 knot (placed at the median of observed income) and again controlling for age and gender. The MAR estimator uses the same choice of B-spline basis functions with the same knot placement.

\begin{figure}[ht]
	\centering
		\includegraphics[width=8cm]{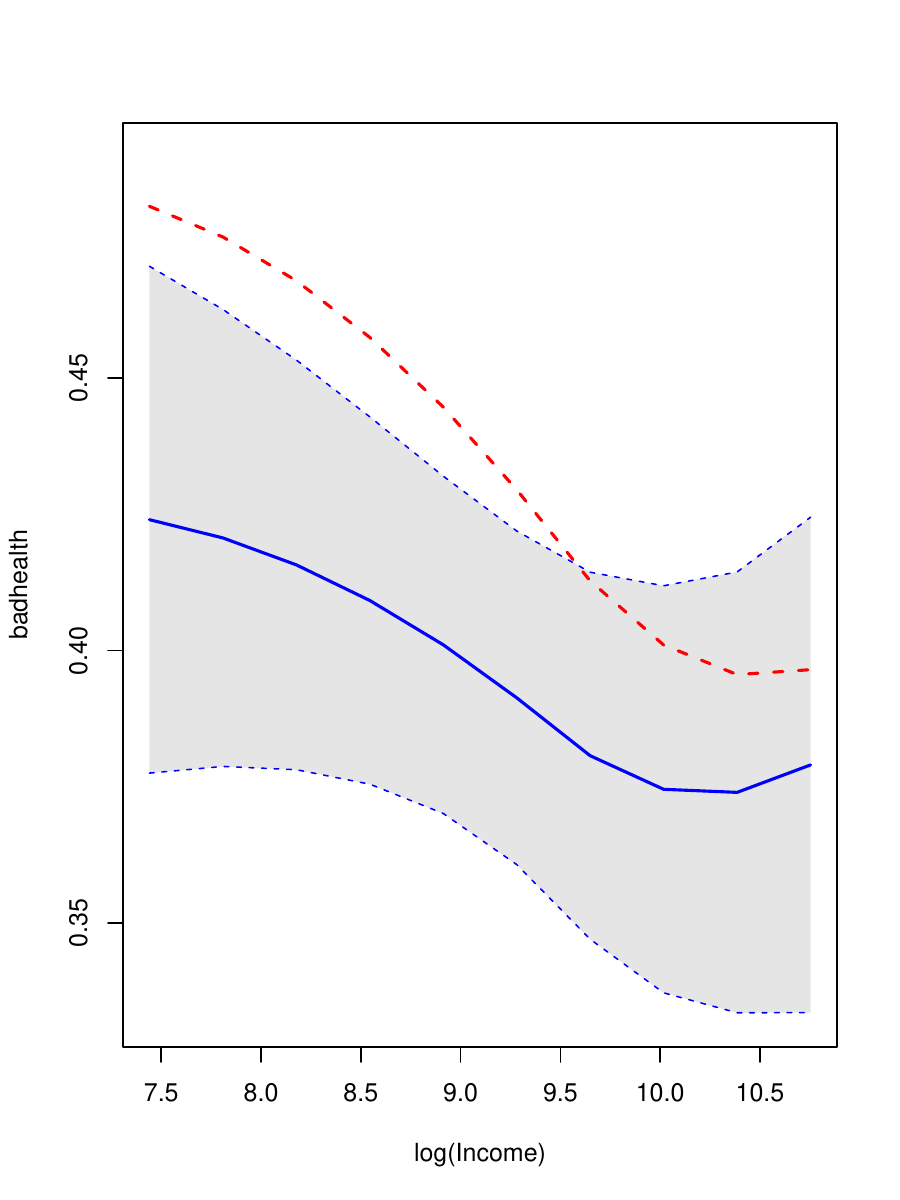}
	\caption{The solid line depicts the FPW series estimator $\widehat g$ while the dashed line depicts a MAR series estimator. The range is from $7.46$ which is the $5\%$ quantile of observed $logIncome$ (i.e., the $\log$ of 1740 Euros) and $10.78$ which is the $95\%$ quantile of observed $logIncome$ (i.e., the $\log$ of 48000 Euros).}\label{health.fig}
\end{figure}

Figure \ref{health.fig} depicts the FPW series estimator with the 95\% uniform confidence bands together with the MAR series estimator evaluated at the median age.
The uniform confidence bands are computed using the bootstrap procedure as described in Section \ref{subsec:ucb} with 1000 bootstrap iterations.
For the MAR series estimator, we consider listwise deletion of missing values.
The FPW series estimator shows a negative association between income and bad health measured by the grip strength which is moderate. For example we find, that the risk of bad health for individuals  at the 25th percentile of income (about 8.9 log income) is about 40\% while for individuals at the 75th percentile (log income of 9.9) it is slightly lower (about 37.5\%). However, according to the confidence interval this difference is not significant. For higher incomes the risk only changes moderately and changes are again not statistically different.  Importantly, our analysis shows that the MAR assumption leads to biased results and potentially erroneous conclusions about the association between income and health. The negative association obtained in the MAR estimator is far more pronounced than in the estimator which accounts for the nonrandom nonresponse. Specifically, with missing at random we predict a risk of bad health at the 25th percentile of about 44\% which is only close to 40\% at the 75th percentile of the income distribution. Note, the confidence intervals show that the results of the two different estimators are significantly different for incomes below the median. For higher incomes the differences are not significantly different.

In addition to the non-linear analysis, we assume a linear $g$ function and present results from a linear model. This linear specification has been used in the literature to test the absolute income hypothesis derived in \cite{Pre_1975}, see e.g. \cite{adeline2017}.
\begin{table}[ht!]
\begin{center}
{\small \begin{tabular}{@{\extracolsep{3pt}}lccc}
\\[-1.8ex]\hline
\hline \\[-1.8ex]
 & \multicolumn{3}{c}{\textsc{Probability of Bad health}} \\
\\[-1.8ex] & \textsc{FPW} & \textsc{MAR} &\textsc{IPW}\\
\hline \\[-1.8ex]
 Constant & -0.104 &  0.004&-0.164$^{**}$\\
  								& (0.078) 							& (0.085) &(0.066)\\
  & & &\\
 log(Income) &-0.009 & -0.017$^{**}$& -0.006\\
  									& (0.006) 						& (0.007)  &(0.005)\\
  & & &\\
 Gender & -0.382$^{***}$ &  -0.413$^{***}$&-0.395$^{***}$\\
  							& (0.013) 						&  (0.014)&(0.013)\\
  & & &\\
 Age 			& 0.009$^{***}$ & 0.009$^{***}$ &0.010$^{***}$\\
  							& (0.001) 					&  (0.001)&(0.001)\\
  & & &\\
  \hline \\[-3ex]
  \end{tabular}}
  \caption{{\small OLS results for FPW, MAR and IPW. Note: $^{*}$p$<$0.1; $^{**}$p$<$0.05; $^{***}$p$<$0.01}}\label{lin:tab}
  \end{center}
  \end{table}

In Table \ref{lin:tab} we depict the results using ordinary least squares estimators with and without probability weighting to account for nonrandom nonresponse. For the FPW estimator we leave the functional form of the selection probability completely unrestricted. Overall, this application underlines the importance to account for nonrandom nonresponse in income information when studying the link between income and health. Importantly, while the MAR finds a negative relation between bad health and income which is significant at the 5\% level, the estimators which account for nonrandom nonresponse reject a significant relation between bad health and income in a linear model. Finally, we note that overall the FPW estimator does not lead to larger standard errors relative to MAR, even if the selection probability is estimated via nonparametric instrumental variable method.

\subsection{Application II: earnings and the demand for housing}\label{s_housing}

In the second application we revisit the question how income affects the demand for housing, for previous studies see e.g.  \cite{QuiRap2004}, \cite{Albouyetal2016} or \cite{DusFitZim2018}. For example, \cite{DusFitZim2018} show for Germany that about 70\% of households in the lowest income quintile are renters whereas in the highest quintile the share is with 30\% markedly lower. As in the application of health and income, studies on housing demand are in general based on survey data with self reported income which  suffer from nonrandom selection.  In the following we estimate the relationship between income and the probability to own a house and quantify the bias when not accounting for nonrandom nonresponse.

The empirical analysis is based on the data of the SOEP. The SOEP is a longitudinal household survey of the German population, for more information see \cite{WagFriSch_07}. The survey includes  self reported standard socio-demographic characteristics including housing and information about different income measures. In contrast to the previous application we focus on a narrow definition of income, labor earnings, which is the most important income component for most individuals. Since a larger fraction of women does not have positive labor earnings, we restrict the analyses to men.

The individual SOEP data can be linked to regional data with information about the average socio-economic situation at the ZIP-code level or even at the residential block. The regional data is provided by a private marketing company which uses administrative information from tax records in combination with credit card information and information about local infrastructure, for more details see \cite{Goebel_etal_2014}. From the regional data, we use the information about the average purchasing power of households living in a specific residential block to construct an instrument for potentially non-random missings of the self-reported earnings information. This information is well suited to construct an instrument. First there exists a strong positive correlation (0.3025) between the individual labor earning and the average purchasing power of households living in a specific residential block. Second, the regional information is available for all individuals such that the instrument does by definition not suffer from nonresponse.

In the empirical analysis we concentrate on $11735$ employed men which are younger than 65 years. Table \ref{table:sum} provides summary statistics of the relevant variables for the analysis.
\bigskip

\begin{table}[h]
\renewcommand{\arraystretch}{1.3}
\centering
{\small
\begin{tabular}{lrrrr}
\hline\hline
 & 1st Qu. & Median & Mean & 3rd Qu.   \\ \hline
\text{Rate of home owner}    & 0  & 0 & 0.45 & 1  \\
\text{Log Gross Earnings per Month}  & 7.49  & 7.88 & 7.76 & 8.25  \\
\text{Age in years}    &32.00 &  41.00 &  40.29  & 49.00  \\

  \hline
\end{tabular}%
}
\caption{\small This table provides summary statistics of the relevant variables for the analysis. The sample includes $11735$ individuals which are aged between 16 and 65 years.  }\label{table:sum}
\end{table}

To quantify the association between earnings and home ownership we use again a semiparametric model with a binary indicator of home ownership and age as depend variable and log monthly earnings as explanatory variable.

\begin{align}\label{est_eq2}
Homeowner_i= g\big(\log(Earning_i^*)\big)+\alpha_0 Age_i + U_i,
\end{align}
where the function $g$ and parameter $\alpha_0$  are unknown. $U_i$ is conditional mean independent of the explanatory variables, $\log(Earning_i^*)$ and $Age_i$.

\begin{figure}[ht]
	\centering
		\includegraphics[width=12cm]{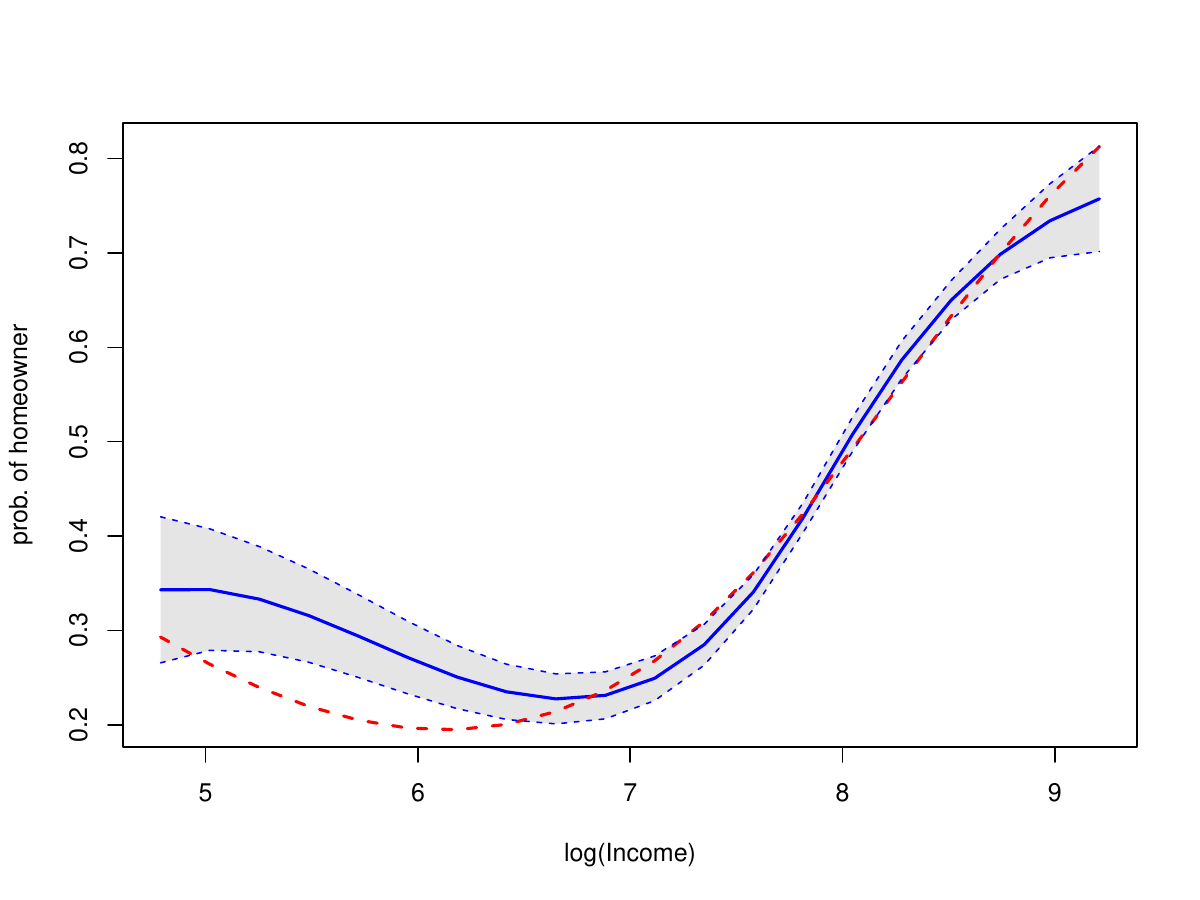}
	\caption{The solid line depicts the FPW series estimator $\widehat g$ while the dashed line depicts a MAR series estimator. The uniform confidence bands are based on 1000 bootstrap iterations. For the MAR series estimator, we consider listwise deletion of missing values.
 }\label{home.fig}
\end{figure}

Figure \ref{home.fig} depicts the FPW series estimator with the 95\% uniform confidence bands (again described in Section \ref{subsec:ucb} using 1000 bootstrap iterations) together with the MAR series estimator evaluated at the median age.
The selection probability $\sol$ is again estimated  using the sieve minimum distance procedure in \eqref{est:prob} with tensor product of cubic B-splines and one knot.
We estimate the function $g$ using the FPW series estimator $\widehat g$ given in \eqref{gen:def:est} using cubic B-splines  with two knots placed at the quantiles of the observed values of
earnings. The MAR estimator is also based on cubic B-splines with the same number and placement of knots.
The FPW series estimator shows a strong positive but non-linear relation between earnings and home ownership. For example we find, that the probability to own a house amounts to about 30\% at monthly earnings below the median earnings in the sample (about 2400 Euros per months). The probability markedly increases to 60\% at monthly earnings at the 75\% percentile (8.25 log points or about 3800 Euro per months) and further to close to 70\% at earnings above 6500 Euros (8.8 log points).  Importantly, our analysis shows that the MAR assumption might lead to biased results  about the association between earnings and the home ownership rate for individuals with very low earnings, i.e. individuals in the lowest decile, which is a key group for public policy. Specifically, we find significantly lower probabilities of home ownerships when not accounting for nonrandom nonresponse. The difference is with about 10 percentage points economically important. Interestingly, for the rest of the earnings distribution the two estimators do not significantly differ and point at a positive relationship between earnings and home ownership

Finally, we assume again a linear $g$ function and consider a linear model (see Table~\ref{lin:housing_tab}). Since the missing at random assumption only affected a small part of the earnings distribution in the nonlinear application, it is not surprising that the coefficients in the linear model do not significantly differ. As expected we find a slightly larger point estimator (0.127) when assuming that nonresponse is random.

\begin{table}[ht!]
\begin{center}
{\small \begin{tabular}{@{\extracolsep{3pt}}lccc}
\\[-1.8ex]\hline
\hline \\[-1.8ex]
 & \multicolumn{3}{c}{\textsc{Probability of home ownership}} \\
\\[-1.8ex] & \textsc{FPW} & \textsc{MAR} &\textsc{IPW}\\
\hline \\[-1.8ex]
 Constant & -0.787$^{***}$ &  -0.853$^{***}$&-0.766$^{***}$\\
  								& (0.041) 							& (0.043) &(0.042)\\
  & & &\\
 log(Earnings) &0.118$^{***}$ & 0.127$^{***}$& 0.117$^{***}$\\
  									& (0.006) 						& (0.006)  &(0.006)\\
  & & &\\
 Age 			& 0.007$^{***}$ & 0.008$^{***}$ &0.008$^{***}$\\
  							& (0.001) 					&  (0.001)&(0.001)\\
  & & &\\
  \hline \\[-3ex]

  \end{tabular}}
  \caption{{\small OLS results for FPW, MAR and IPW. Note: $^{*}$p$<$0.1; $^{**}$p$<$0.05; $^{***}$p$<$0.01}}\label{lin:housing_tab}
  \end{center}
  \end{table}

\section{Conclusion}

In this paper we derive a nonparametric estimators that addresses the problem of nonrandom selection that can be related to nonrandom selection into treatment programs, selective measurement error or through selective nonresponse or missingness of data. Identification of the regression function relies on instrumental variables that are independent of selection conditional on potential covariates.  We obtain identification of our nonparametric regression function without restricting the selection probability to belong to a parametric class of functions via a novel partial completeness assumption and provide primitive conditions for it. We achieve optimal rates of nonparametric rates of convergence of our estimator.  Moreover, the variance of our estimator is not larger than in the case where the variables are fully observed.

We demonstrate the usefulness and relevance of our method in survey data with nonrandom missingness in two different applications with different instruments. First, we analyze the association between bad health and income. We show that standard methods that do not account for the nonrandom selection process are strongly upward biased for individuals with below-median income. Moreover, in a linear model the standard estimator finds a negative relation which is significant at the 5\% level, however the estimators which account for nonrandom non-response reject a significant relation between bad health and income. In the second application we focus on the relation between housing and earnings. While the different estimators with and without the assumption of random missingness lead to the similar estimates in a linear model and for a large share of the earnings distribution, we document significant and important differences for individuals in the lowest earnings decile, which is a central group for public policy.

\appendix
\section{Appendix}\label{app:proofs}
 \begin{proof}[\textsc{Proof of Proposition \ref{prop:prim}.}]
Consider the functional form restriction \eqref{eq:mult}. 
The conditional mean restriction $\Ex[\phi_1(Y)\phi_2(X^*)|Y,W]=1$ and the conditional independence assumption $f_{X^*|Y,W}=f_{X^*|W}$ yield
\begin{align*}
\Ex[\phi_2(X^*)|W=w]-\frac{1}{\phi_1(y)}=0\text{ and }\Ex[\phi_2(X^*)|W=w]-\frac{1}{\phi_1(y')}=0
\end{align*}
for all $y,y'$. Subtracting both equations gives $\phi_1(y)=\phi_1(y')$ for all $y,y'$ in the support of $Y$.

Consider the functional form restriction \eqref{eq:tf}. 
 From $\Ex[\psi(\phi_1(Y)+\phi(X^*))|Y,W]=1$ and the restriction $f_{X^*|Y,W}=f_{X^*|W}$ we infer
\begin{align*}
\Ex[\psi(\phi_1(y)+\phi_2(X^*))|W=w]-\Ex[\psi(\phi_1(y')+\phi_2(X^*))|W=w]=0
\end{align*}
for all $y,y'$ in the support of $Y$. 
Given $y'$, we denote by $\underline j\geq 1$ the smallest integer such that $\Ex[\psi^{(\underline j)}\left(\phi_1(y')+\phi_2(X^*)\right)|W=w]\neq 0$ for some $w$. 
Taylor series expansion applied to the analytic function $\psi$ and Fubini's theorem yield
\begin{align*}
\sum_{j\geq \underline j}\frac{\Ex[\psi^{(j)}\left(\phi_1(y')+\phi_2(X^*)\right)|W=w]}{j!}\left(\phi_1(y)-\phi_1(y')\right)^j
=0
\end{align*}
Consequently, we have $\phi_1(y)=\phi_1(y')$ and, since the choice of $(y,y')$ is arbitrary, the function $\phi_1$ is constant, which completes the proof. 
\end{proof}

\begin{proof}[\textsc{Proof of Theorem \ref{thm:ident}.}]
For the proof of the result, we proceed in two steps.
\noindent\textbf{ Step 1:} We show that the regression function $g$ satisfies equation \eqref{eq:ident:g}.
Making use of relation $f_{Y|X^*}=f_{Y\Delta |X^*}/f_{\Delta | Y X^*}$ and $f_{Y\Delta |X^*}=f_{Y |\Delta X^*}f_{\Delta |X^*}$
we obtain
\begin{align}
g(x)&=\int yf_{ Y |X^*}(y|x)dy\nonumber\\
&=\int y\underbrace{\frac{f_{\Delta|X^*}(1|x)}{f_{\Delta|YX^*}(1|y,x)}}_{=\omega(y,x)} f_{ Y |\Delta X^*}(y|1,x)dy\nonumber\\
&=\Ex\left[Y\omega(Y,x)\Big|\Delta=1, X^*=x\right]\label{eq:ident:proof}
\end{align}
using the definition of the FPW function $\omega$ as given in \eqref{def:omega}.

\noindent\textbf{Step 2:} We show that the FPW function $\omega$ in \eqref{def:omega} is identified. 
Consider the identified set of functions given by
\begin{align*}
\mathcal I = \Big\{ \phi\in\mathcal B:  \Ex\left[\left.D/\phi(Y,X^*)\right|Y,W\right]  = 1\Big\}.
\end{align*}
Clearly, for the true selection probability $\varphi(y,x)=\PP(\Delta=1|Y=y,X^*=x)$ we have $\varphi\in\mathcal I$ by Assumption \ref{A_pos} which ensures that $\varphi$ is uniformly bounded from below.  Assumption \ref{A_instruments} implies for any function $\phi\in\mathcal I$ that
\begin{align*}
  \Ex\left[\left.\frac{\varphi(Y,X^*)}{ \phi(Y,X^*)}-1\right|Y,W\right] & = 0.
\end{align*}
Partial completeness, see Assumption \ref{A_identification}, yields
\begin{align*}
\frac{\varphi(Y,X^*)}{ \phi(Y,X^*)}-1=\psi(X^*)
\end{align*}
for some unknown function $\psi$ and thus,
\begin{align}\label{cond_prob_yx}
 \varphi(Y,X^*)=\phi(Y,X^*)(\psi(X^*)+1).
\end{align}
Employing relation \eqref{eq:ident:proof} with $Y\omega(Y,x)$ replaced by $\omega(Y,x)$, we obtain
\begin{align*}
1=\Ex\left[\omega(Y,x)\Big|\Delta=1, X^*=x\right]
\end{align*}
and hence, by the definition of FPW function $\omega$, the conditional probability $\PP(\Delta=1|X^*=x)$ satisfies
\begin{align}\label{cond_prob_x}
\PP(\Delta=1|X^*=x)=\left(\Ex \left[\frac{1}{\varphi(Y,x)}\Big|\Delta=1, X^*=x\right]\right)^{-1}.
\end{align}
Consequently, we obtain for the FPW function $\omega$ that
\begin{align*}
\omega(y,x)&=\frac{\PP(\Delta=1|X^*=x)}{\varphi(y,x)}\qquad\qquad\qquad\quad\, \text{(due to Definition in equation \eqref{def:omega})}\\
&=\left(\Ex \left[\frac{\varphi(y,x)}{\varphi(Y,x)}\Big|\Delta=1, X^*=x\right]\right)^{-1}
\qquad\qquad\qquad \text{(due to equation \eqref{cond_prob_x})}\\
&=\left(\Ex \left[\frac{\phi(y,x)(\psi(x)+1)}{\phi(Y,x)(\psi(x)+1)}\Big|\Delta=1, X^*=x\right]\right)^{-1}
\quad\,\,\text{(due to equation \eqref{cond_prob_yx})}\\
&=\left(\Ex \left[\frac{\phi(y,x)}{\phi(Y,x)}\Big|\Delta=1, X^*=x\right]\right)^{-1}\\
\end{align*}
for all $\phi\in\mathcal I$. This shows identification of the FPW function $\omega$ which thus completes the proof. 
\end{proof}

\paragraph{Additional Notation} For ease of notation, let $V_i=(Y_i,X_i)$ for $1\leq i\leq n$. Let $\cF$ be a class of measurable functions with a measurable envelope function $F$. Then $N(\varepsilon, \cF , \|\cdot\|_V)$ and $N_{[\,]}(\varepsilon,\cF , \|\cdot\|_V)$, respectively, denote the covering and bracketing numbers for the set $\cF$. The bracketing integral of $\cF$ is denoted by
\begin{align*}
J_{[\,]}(1,\cF,L^2_{V})=\int_0^1\sqrt{1+\log N_{ [\,]}(\varepsilon \,\|F\|_V,\cF , \|\cdot\|_V)}d\varepsilon.
\end{align*}

For ease of notation we write $\sum_i$ for $\sum_{i=1}^n$ and $\sum_{\Delta_i}$ for $\sum_{i=1, \Delta_i=1}^n$. 
We further define $\widehat Q=n^{-1}\sum_{\Delta_i} p^K(X_i)p^K(X_i)'$ and 
$\widehat h(x,\phi)=p^K(x)'\,(n\widehat Q)^{-1}\,\sum_{\Delta_i}p^K(X_i) /\phi(V_i)$. We thus have that the estimator of the FPW function coincides with $\widehat\omega(v,\phi)=\big(\phi(y,x)\widehat h(x,\phi)\big)^{-1}$. 
Further, let $\widehat Q(\phi)=n^{-1}\sum_{\Delta_i}\widehat \omega(V_i,\phi) p^K(X_i)p^K(X_i)'$.
By Assumption \ref{Ass_bas}, the eigenvalues of $\Ex[\Delta\, p^K(X)p^K(X)']$ are bounded away from zero and hence, it may be assumed  that $\Ex[\Delta\, p^K(X)p^K(X)']=I_K$, where $I_K$ denotes the $K\times K$ identity matrix. We also denote $\gamma=\Ex[ \Delta g(X)p^K(X)]$. Throughout the proofs, we use the notation $a_n\lesssim b_n$ to denote $a_n\leq C b_n$ for some constant $C > 0$ and for all $n\geq 1$.  

\begin{proof}[\textsc{Proof of Theorem \ref{thm:est:cond:par}.}]
 The proof is based on the upper bound
 \begin{align*}
  \|\widehat g-g\|_X\leq  \|\widehat g-\gamma'p^K\|_X +\|\gamma'p^K-g\|_X.
 \end{align*}
 Since $\|\gamma'p^K-g\|_X=O(K^{-{\alpha/d_x}})$ by Assumption \ref{Ass_bas} $(iv)$ it is sufficient to consider the first term on the right hand side. We observe
\begin{multline*}
 \|\widehat g-\gamma'p^K\|_X^2\lesssim \big\|\widehat Q(\widehat\varphi)^{-1}\|^2\|I_K-\widehat Q(\widehat\varphi)\|^2\big\|n^{-1}\sum_{\Delta_i} p^K(X_i) \big(Y_i-\gamma'p^K(X_i)\big)\, \widehat \omega(V_i,\widehat\varphi)\big\|^2\\\hfill
 +\big\|n^{-1}\sum_{\Delta_i} p^K(X_i) (Y_i-\gamma'p^K(X_i))\, \widehat \omega(V_i,\widehat\varphi)\big\|^2.
\end{multline*}
From Lemma \ref{Lem:Mat} we deduce $\|\widehat Q(\widehat\varphi)-I_K\|^2=K^2/n$ and thus $\|\widehat Q(\widehat\varphi)^{-1}\|^2=1+o_p(1)$.
Consequently, it is sufficient to consider
\begin{align*}
\big\|n^{-1}&\sum_{\Delta_i} (Y_i-\gamma'p^K(X_i))\, \widehat \omega(V_i,\widehat\varphi)p^K(X_i)\big\|^2\\
&\lesssim\underbrace {\big\|n^{-1}\sum_{\Delta_i}Y_i\,\Big(\widehat\omega(V_i,\widehat\varphi)-\omega(V_i)\Big)\,p^K(X_i)\big\|^2}_{I}\\
&+\underbrace {\big\|n^{-1}\sum_{\Delta_i}\gamma'p^K(X_i)\,\Big(\widehat\omega(V_i,\widehat\varphi)-\omega(V_i)\Big)\,p^K(X_i)\big\|^2}_{II}\\
&+\underbrace {\big\|n^{-1}\sum_{\Delta_i}(Y_i-\gamma'p^K(X_i))\, \omega(V_i)p^K(X_i)\big\|^2}_{III}.
\end{align*}
Consider $I$. We have
\begin{align*}
 I&=\big\|n^{-1}\sum_i Y_i\Delta_i\,\big(\frac{1}{\widehat\varphi(V_i)\,\widehat h(X_i,\widehat\varphi)}-\frac{1}{\varphi(V_i)\,h( X_i,\varphi)}\big)\, p^K(X_i)\big\|^2\\
 &\leq \sup_{(\phi,\psi)\in\mathcal B\times\mathcal H}\big\|n^{-1}\sum_iY_i\Delta_i\,\big(\frac{1}{\phi(V_i)\,\psi(X_i,\phi)}-\frac{1}{\varphi(V_i)\,h( X_i,\varphi)}\big)\, p^K(X_i)\big\|^2.
\end{align*}
For $1\leq j\leq K$ and $1\leq i\leq n$, we introduce the function
\begin{equation}\label{def:h_j}
 h_j(V_i,\phi,\psi):=Y_i\Delta_i\,\Big(\frac{1}{\phi(V_i)\,\psi(X_i,\phi)}-\frac{1}{\varphi(V_i)\,h( X_i,\varphi)}\Big)\, p_j(X_i)
\end{equation}
and the class of functions
$\cF_j=\{h_j(\cdot,\phi,\psi):\,(\phi,\psi)\in\mathcal B\times\mathcal H\}$.
For all $(\phi,\psi)\in\mathcal B\times\mathcal H$ we observe
\begin{equation*}
 \big|h_j(V_i,\phi,\psi)\big|
 \leq \big|Y_i\Delta_i\,p_j(X_i)\big|\sup_{(\phi,\psi)\in\mathcal B\times\mathcal H}\big|\frac{1}{\phi(V_i)\,\psi(X_i,\phi)}-\frac{1}{\varphi(V_i)\,h( X_i,\varphi)}\big|=:F_j( V_i)
\end{equation*}
and hence, $F_j$ is an envelope function of the class $\cF_{j}$. 
By the definition of $\mathcal B$, all functions of $\mathcal B$ and thus also of $\mathcal H$ are uniformly bounded away from zero. Therefore, we obtain the upper bound
\begin{align*}
\|F_j\|_V\lesssim \Ex \big[Y^2\Delta\,p_j^2(X)\big]\leq \Ex \big[\Delta g^2(X)\,p_j^2(X)\big]+\sigma^2\,\Ex \big[ p_j^2(X^*) \big],
\end{align*}
for some finite constant  $\sigma>0$ such that $\Ex[\Delta\, U^2|X^*]\leq \sigma^2$.
Theorem 2.14.5 of \cite{Vaart2000}  gives
\begin{align*}
\sum_{j=1}^K&\,\Ex\Big[\sup_{(\phi,\psi)\in\mathcal B\times\mathcal H}\Big|n^{-1/2}\sum_i h_j(V_i,\phi,\psi)-\Ex h_j(V,\phi,\psi)\Big|^2\Big]\\
&\leq\sum_{j=1}^K\Big(\Ex\sup_{(\phi,\psi)\in\mathcal B\times\mathcal H}\Big|n^{-1/2}\sum_i h_j(V_i,\phi,\psi)-\Ex h_j(V,\phi,\psi)\Big|+\|F_j\|_V\Big)^2.
\end{align*}
We further conclude by applying the last display of Theorem 2.14.2 of \cite{Vaart2000}  for $1\leq j\leq K$
\begin{equation}\label{key:ineq}
\Ex\sup_{(\phi,\psi)\in\mathcal B\times\mathcal H}\Big|n^{-1/2}\sum_i h_j(V_i,\phi,\psi)-\Ex h_j(V,\phi,\psi)\Big|\lesssim J_{[\,]}(1,\cH_{j},\|\cdot\|_V)\,\|F_j\|_V.
\end{equation}
Due to Lemma 4.2 $(i)$ of \cite{Chen07} we have uniformly in $j$ that
\begin{equation*}
 \log N_{[\,]}\Big(\varepsilon,\,\cF_{j},\|\cdot\|_V\Big)
\leq\log N\Big(\frac{\varepsilon}{2 C},\mathcal B,\|\cdot\|_V\Big)
+\log N\Big(\frac{\varepsilon}{2 C},\mathcal H,\|\cdot\|_X\Big).
\end{equation*}
Following  \cite[Remark 3]{CLVK03econometrics},  $\log N(\delta,\mathcal B,\|\cdot\|_\infty) \leq C \delta^{-d/\alpha}$ and hence, we obtain $\int_0^1\sqrt{\log N(\varepsilon,\mathcal B, \|\cdot\|_\infty)}d\varepsilon<\infty$ as long as $\alpha>d/2$. Similarly, we conclude $\int_0^1\sqrt{\log N(\varepsilon,\mathcal H, \|\cdot\|_\infty)}d\varepsilon<\infty$.
Now Assumption \ref{Ass_bas} $(v)$ together with the inequality $\sqrt{a+b}\leq\sqrt a+\sqrt b$ for $a,b\geq 0$  implies $J_{[\,]}(1,\cF_{j},L^2_{ V})<\infty$ uniformly in $j$.
Consequently, we have
\begin{align*}
I&\lesssim n^{-1}\sum_{j=1}^K\Big(\Ex\sup_{(\phi,\psi)\in\mathcal B\times\mathcal H}\Big|n^{-1/2}\sum_i h_j(V_i,\phi,\psi)-\Ex h_j(V,\phi,\psi)\Big|+\|F_j\|_V\Big)^2+\|F_j\|_V^2\\
&\lesssim n^{-1}\sum_{j=1}^K\|F_j\|_V^2\\
&\lesssim n^{-1}\sup_{x\in\mathcal X}\|p^K(x)\|^2 \,\Big(\|g\|_X^2+\sigma^2\Big)\\
&=O_p(K/n).
\end{align*}
Consider $II$. It follows $II=O_p(K/n)$ similarly to the upper bound for the term $I$ by making use of the following inequality
\begin{align*}
\sum_{j=1}^K \Ex\big|\gamma'p^K(X) p_j(X)\big|^2&\leq \sup_{x\in\mathcal X}\|p^K(x)\|^2
 \|\gamma'p^K\|_X^2\\
 &=\sup_{x\in\mathcal X}\|p^K(x)\|^2 \|\gamma\|^2\\
 &=O(K),
\end{align*}
where the last equality is due to Assumption \ref{Ass_bas} $(ii)$. 
Consider $III$. From Corollary \ref{coro:fpw} we deduce
\begin{align*}
\gamma&= \Ex[\Delta g(X)p^K(X)]\\
&= \Ex\big[\Delta g(X^*)p^K(X^*)\Ex[\omega(Y,X^*)|\Delta,X^*]\big]\\
&= \Ex\big[\Delta Yp^K(X^*)\omega(Y,X^*)\big]+\Ex\big[\Delta (g(X^*)-Y)p^K(X^*)\omega(Y,X^*)\big]\\
&= \Ex[\Delta Y \omega(V)\, p^K(X)]+\underbrace{\Ex\big[(g(X^*)-Y)p^K(X^*)\PP(\Delta=1|X^*)\big]}_{=0}\\
&= \Ex[\Delta Y \omega(V)\, p^K(X)]
\end{align*}
and consequently 
\begin{align*}
\Ex[\Delta(Y-\gamma'p^K(X))\, \omega(V)p^K(X)]&=\gamma-\underbrace{\Ex[\Delta p^K(X) \omega(V)p^K(X)']}_{=I_K}\,\gamma\\
&=0.
\end{align*}
Using that the FPW function $\omega$ is uniformly bounded from above, we thus conclude
\begin{align*}
\Ex III&= n^{-1}\,\Ex\big\|\Delta\,(Y-\gamma'p^K(X))\, \omega(V)p^K(X)\big\|^2\\
&\leq  2\,n^{-1}\,\sup_{x\in\mathcal X}\|p^K(x)\|^2\sup_{v}|\omega(v)|^2\Big(\Ex\big|\Delta(Y-g(X))\big|^2+\Ex\big|\Delta(g(X)-\gamma'p^K(X))\big|^2\Big)\\
&\leq  2\,n^{-1}\,\sup_{x\in\mathcal X}\|p^K(x)\|^2\sup_{v}|\omega(v)|^2\Big(\Var(U)+\|g-\gamma'p^K\|_X^2\Big)\\
&\lesssim n^{-1}K\, (1+K^{-2\alpha/d_x}),
\end{align*}
again using  Assumption \ref{Ass_bas} $(ii)$ and the approximation error imposed in  Assumption \ref{Ass_bas} $(iv)$, which completes the proof.
\end{proof}

\begin{proof}[\textsc{ Proof of Theorem \ref{thm:inference:par}.}]
We make use of the lower bound of the sieve variance given by
 \begin{align*}
   \textsl{v}_K(x)&=
  p^K(x)'\Ex\Big[p^K(X)\Var\big(U\Delta \, \omega(V)\big|X\big)p^K(X)'\Big]p^K(x)\\\hfill
 &\gtrsim \|p^K(x)\|^2, 
 \end{align*}
which is due to the condition that $\Var\big(U\Delta \, \omega(V)\big|X\big) $ is uniformly bounded from below, see Assumption \ref{A:inf:par}. 
 The proof is based on the relationship
 \begin{align*}
  \widehat g(x)-\gamma'p^K(x)
  &=\underbrace {p^K(x)' \widehat Q(\widehat\varphi)^{-1}\frac{1}{n}\sum_{\Delta_i=1}p^K(X_i)\,U_i \,  \omega(V_i)}_{I}\\
  &+\underbrace {p^K(x)' \widehat Q(\widehat\varphi)^{-1}\frac{1}{n}\sum_{\Delta_i=1}p^K(X_i)\,U_i \, \big(\widehat \omega(V_i,\widehat\varphi) -  \omega(V_i)\big)}_{II}\\
  &+\underbrace {p^K(x)' \widehat Q(\widehat\varphi)^{-1}\frac{1}{n}\sum_{\Delta_i=1}p^K(X_i)\,\big( g(X_i)-\gamma'p^K(X_i)\big)\,  \omega(V_i)}_{III},
 \end{align*}
where we evaluate each summand on the right hand side separately. 
Consider $I$. Following the proof  of Theorem \ref{thm:est:cond:par} we obtain
\begin{align*}
  \sqrt {n/\textsl{v}_K(x)}\,I
  &=\sum_{\Delta_i}\big(n\, \textsl{v}_K(x)\big)^{-1/2}p^K(x)'p^K(X_i)\,U_i \, \omega(V_i)+o_p(1)\\
  &=\sum_{\Delta_i} s_{in}+o_p(1).
 \end{align*}
 In the following, we show that $s_{in}$, $1\leq i \leq n$, satisfy the Lindeberg conditions for the CLT. First, note that $\Ex[s_{in}]=0$  we observe
 \begin{align*}
 \Ex[s_{in}]&=\Ex[\Delta\, U \omega(V)p^K(X)]\\
 &=\Ex[\Delta\, U \omega(Y,X^*)p^K(X^*)]\\
 &= \Ex[\PP(\Delta=1|Y,X^*)\, U \omega(Y,X^*)p^K(X^*)]\qquad\quad\qquad\text{(since $U=Y-g(X^*)$) }\\
 &= \Ex[U\, \PP(\Delta=1|X^*)p^K(X^*)]\qquad\quad\text{(due to definition of FPW function } \omega)\\
 &=0,\qquad\qquad\qquad\qquad\qquad\qquad\qquad\qquad\qquad\quad\text{(using that $\Ex[U|X^*]=0$).}
 \end{align*}
Thus,   $s_{in}$, $1\leq i \leq n$, are centered variables and  by the definition of $\textsl{v}_K(x)$ we have $n\Ex[s_{in}^2]=1$. 
Moreover,  for all $\delta>0$ we observe
\begin{align*}
 \sum_{D_i} \Ex [s_{in}^2\1{\{|s_{in}|>\delta\}}]&=n\delta^2\Ex\big[\big|s_{in}/\delta\big|^2 \1\{|s_{in}/\delta|>1\}\big]\\
&\leq n\,\delta^2\, \Ex|s_{in}/\delta|^4 \\
&\leq Cn^{-1} \delta^{-2} K^2 \Ex|\Delta\,U \omega(Y,X^*)|^4\\
&=o(1)
\end{align*}
due to the fourth moments condition imposed in  Assumption \ref{A:inf:par} and the rate condition $K^2=o(n)$. The  Lindeberg-Feller
CLT thus implies $\sum_i s_{in}\stackrel{d}{\rightarrow}\mathcal N(0,1)$.
Consider $II$. Recall that $\omega(V,\varphi)=(\varphi(V)h(X,\varphi))^{-1}$ is identified due to Theorem \ref{thm:ident}. Given consistency of the estimator $\widehat \varphi(\cdot)\widehat h(\cdot,\widehat \varphi)$ it is sufficient to consider the shrinking function class
$\mathcal A_n=\set{(\phi,\psi)\in\mathcal B\times\cH: \|\phi(\cdot)\psi(\cdot,\phi)-\varphi(\cdot)h(\cdot,\varphi)\|_\infty\leq r_n}$ with $r_n=o(1)$.
Recall the definition of $h_j$,  $1\leq j\leq K$,  in \eqref{def:h_j} and thus
\begin{align*}
 II
 &\leq \sum_{j=1}^K \sup_{(\phi,\psi)\in\mathcal A_n}\Big|\frac{1}{n}\sum_ip_j(x)h_j(V_i,\phi,\psi)\Big|+o_p\big(\sqrt{\textsl{v}_K(x)}\big).
\end{align*}

For all $(\phi,\psi)\in\mathcal A_n$ we observe
\begin{equation*}
 \big|h_j(V_i,\phi,\psi)\big|
 \leq \big|Y_i\,p_j(X_i)\big|\sup_{(\phi,\psi)\in\mathcal A_n}\big|\frac{1}{\phi(V_i)\,\psi(X_i,\phi)}-\frac{1}{\varphi(V_i)\,h( X_i,\varphi)}\big|=:F_{jn}( V_i)
\end{equation*}
and hence, $F_{jn}$ is an envelope function of the class $\cF_{jn}=\{h_j(\cdot,\phi,\psi):\,(\phi,\psi)\in\mathcal A_n\}$. In particular, using that $\varphi(\cdot)\,h( \cdot,\varphi)$ is uniformly bounded from below (by the definition of the function class $\mathcal B$) we obtain by employing inequality \eqref{key:ineq} that
\begin{align*}
\sqrt n \Ex|II|&\lesssim \sum_{j=1}^K\sqrt{\Ex|p_j(x)F_{jn}(V)|^2}\\
&\lesssim\sup_{(\phi,\psi)\in\mathcal A_n}\|\phi(\cdot)\psi(\cdot,\phi)-\varphi(\cdot)h(\cdot,\varphi)\|_\infty\sum_{j=1}^K\sqrt{\Ex|p_j(x)p_j(X)|^2}\\
&\lesssim r_n\sum_{j=1}^K\sqrt{\Ex|p_j(x)p_j(X)|^2}\\
&= r_n\sum_{j=1}^K|p_j(x)|\\
&=o\big(\sqrt{\textsl{v}_K(x)}\big),
\end{align*}
where the last bound is due to Assumption \ref{A:inf:par} $(ii)$. 
Consider $III$. Using $\Ex[\omega(V)|X^*,D=1]=1$ we obtain $\Ex\big[p^K(X)\,\Delta\big( g(X)-\gamma'p^K(X)\big)\,  \omega(V)\big]=0$. We thus have
\begin{align*}
\sqrt n\,III&=n^{-1/2} \|p^K(x)\| \sqrt{\Ex\big\|p^K(X)\,( g(X)-\gamma'p^K(X))\big\|^2}\times o_p(1)\\
&=O_p\Big( n^{-1/2}K \|g-\gamma'p^K\|_\infty \Big)\\
&=o_p(1).
\end{align*}
Consequently, condition \eqref{cond:inference:par} implies  $\sqrt{n/\, \textsl{v}_K(x)}\,\big(\widehat g(x) - g(x)\big)\stackrel{d}{\rightarrow}\mathcal N(0,1)$. The result follows by Lemma \ref{lem_cons_var} which establishes consistency of the sieve variance $\widehat{\textsl{v}}_K(x)$. 
\end{proof}

 \begin{proof}[{\textsc{Proof of Theorem \ref{thm:bands}.}}]
Due to the \cite[Proof of Theorem 4.1]{chen2018optimal} it is sufficient to show
\begin{align*}
\left|\sqrt{n/\widehat{\textsl{v}}_K(x)}\big(  \widehat g(x) - g(x)\big) - \mathbb{Z}(x)\right|=o_p(r_n)
\end{align*}
since then the result follows by the anti-concentration inequality of \cite[Theorem 2.1]{chernozhukov2014}. 
We denote $Z^n = \{(\Delta_1,Y_1,X_1,W_1), \ldots, (\Delta_n,Y_n,X_n,W_n)\}$.
\\
\textbf{Step 1.} We start by showing that $\sqrt{n/\widehat{\textsl{v}}_K(x)}\big( \widehat g(x) - g(x)\big)$ can be uniformly approximated by the process
  \begin{align*}
 \widehat{ \mathbb Z}(x) =  \frac{p^K(x)'}{\sqrt{n{\textsl{v}}_K(x)}}\sum_{\Delta_i}^np^K(X_i)U_i \,  \omega(V_i).
\end{align*}
We observe
{\small \begin{align*}
  &\left|\sqrt{n/\widehat{\textsl{v}}_K(x)}\big(  \widehat g(x) - g(x)\big) - \widehat{\mathbb{Z}}(x)\right| \\
  &\leq 
  \underbrace{\left|\frac{\sqrt{n}p^K(x)'\widehat Q(\widehat\varphi)^{-1}}{\sqrt{ \textsl{v}_K(x)}}\sum_{\Delta_i}p^K(X_i)\,U_i \,  \omega(V_i) - \widehat{\mathbb{Z}}(x)\right|}_{I(x)}\\
&  + \left|\sqrt{\frac{ \textsl{v}_K(x)}{\widehat{\textsl{v}}_K(x)}} - 1\right| \Bigg(\underbrace{
\left|\frac{\sqrt{n}p^K(x)'Q^{-1}}{\sqrt{ \textsl{v}_K(x)}}\frac{1}{n}\sum_{\Delta_i}p^K(X_i)\,U_i \,  \omega(V_i) \right|}_{II(x)}\\
&\qquad+  \underbrace{
\left|\frac{\sqrt{n}p^K(x)'Q^{-1}}{\sqrt{ \textsl{v}_K(x)}}\frac{1}{n}\sum_{\Delta_i}p^K(X_i)\,U_i \, \big(\widehat \omega(V_i,\widehat\varphi) -  \omega(V_i)\big)\right|
 }_{III(x)}\\
 &\qquad+\underbrace{\left|\frac{\sqrt{n}p^K(x)'Q^{-1}}{\sqrt{ \textsl{v}_K(x)}}\frac{1}{n}\sum_{\Delta_i}p^K(X_i)\,\big( g(X_i)-\gamma'p^K(X_i)\big)\,  \omega(V_i)\right|}_{IV(x)}\\
  &\qquad+\underbrace{\left|\frac{\sqrt{n}}{\sqrt{ \textsl{v}_K(x)}}\Big(p^K(x)'Q^{-1}\frac{1}{n}\sum_{\Delta_i}p^K(X_i)\,\gamma'p^K(X_i)\,  \omega(V_i)-g(x)\Big)\right|}_{V(x)}\Bigg).
\end{align*}}
We have $\| \widehat Q(\widehat\varphi) - \text{I}_K\| = O_p(K\sqrt{\log(n)/n})$, see Lemma \ref{Lem:Mat}. Further, we obtain
\begin{align*}
\sup_{x\in\mathcal{C}}I(x)&=\sup_{x\in\mathcal{C}}\left|\frac{p^K(x)'(\widehat Q(\widehat\varphi)^{-1}-\text{I}_K)}{\sqrt{n\textsl{v}_K(x)}}\sum_{\Delta_i}p^K(X_i)\,U_i \,  \omega(V_i)\right|\\
 &=O_p(K^2\sqrt{\log(n)/n}).
\end{align*}
Define the process $\mathbb Z(x)=p^K(x)'Q^{-1/2}\mathcal Z/\sqrt{\textsl{v}_K(x)}$. We have
\begin{align*}
 \sup_{x\in\mathcal{C}} II(x) & \leq   \sup_{x\in\mathcal{C}}I(x)+ \sup_{x\in\mathcal{C}}|\widehat{\mathbb Z}(x)|\\
  & =  O_p\Big(K^2\sqrt{\log(n)/n}\Big) + \sup_{x\in\mathcal{C}}\big|\widehat{\mathbb Z}(x) - \mathbb Z(x)\big| + \sup_{x\in\mathcal{C}}|\mathbb Z(x)|\\
  & =   O_p\Big(K^2\sqrt{\log(n)/n}\Big) + o_p(r_n) + \sup_{x\in\mathcal{C}}|\mathbb Z(x)|\\
  & =  o_p(r_n) + O_p(c_n).
\end{align*}
where the third bound is due to step 2 below and the last equality is because of the condition $K^{5/2}=o(r_n^3\sqrt{n})$ and by \cite[Lemma G.5]{chen2018optimal}, which is valid under our assumptions and which implies $\sup_{z\in\mathcal{C}}|\mathbb{Z}(z)| = O_p(c_n)$.
Consider  $III(x)$. Recall the definition of $h_j$,  $1\leq j\leq K$ given in \eqref{def:h_j}, we obtain
\begin{align*}
 & \sup_{x\in\mathcal{C}}\frac{\sqrt{n}\left|III(x) \right|}{\sqrt{\textsl{v}_K(x)}}\\
&=\sup_{x\in\mathcal{C}}\frac{\|p^K(x)'\widehat Q(\widehat\varphi)^{-1}\|}{\sqrt{\textsl{v}_K(x)}}\Big(\sup_{(\phi,\psi)\in\mathcal B\times\mathcal H}\sum_{j=1}^K\Big| \frac{1}{\sqrt n}\sum_i\psi_j(V_i,\phi,\psi)\Big|^2\Big)^{1/2}\\
&=o_p(K^2\sqrt{\log(n)/n})
\end{align*}
following the proof of Theorem \ref{thm:inference:par}. Moreover, we observe
\begin{align*}
&\sup_{x\in\mathcal{C}}\frac{\sqrt{n}\left|IV(x) \right|}{\sqrt{ \textsl{v}_K(x)}}\\
 &\leq 
\sup_{x\in\mathcal{C}}\frac{\|p^K(x)'\widehat Q(\widehat\varphi)^{-1}\|}{\sqrt{ \textsl{v}_K(x)}} \Big\| \frac{1}{\sqrt n}\sum_{\Delta_i}p^K(X_i)\,\big( g(X_i)-\gamma'p^K(X_i)\big)\,  \omega(V_i)\Big\|\\
&=O_p\Big( n^{-1/2}K \|g-\gamma'p^K\|_\infty \Big)
\end{align*}
again following the proof of Theorem \ref{thm:inference:par}. 
 For the last summand we note
\begin{align*}
 \sup_{x\in\mathcal{C}}\frac{\sqrt{n}\left|V(x) \right|}{\sqrt{ \textsl{v}_K(x)}} &\leq \sup_{x\in\mathcal{C}}\frac{\sqrt{n}}{\sqrt{\textsl{v}_K(x)}}\big|\gamma'p^K(x)-g(x)\big|.
\end{align*}
Consequently, Lemma \ref{lem_cons_var}, i.e.,  $\sup_{x\in\mathcal{C}}\left|\sqrt{\textsl{v}_K(x)/\widehat{\textsl{v}}_K(x)} - 1\right| = O_p(\sqrt{n^{-1/2}K^{1/2}\log(n)})$ and
the rate requirement in Assumption \ref{Ass:uniform} $(iii)$ imply
\begin{align*}
\left|\sqrt{n/\widehat{\textsl{v}}_K(x)}\big(  \widehat g(x) - g(x)\big) - \widehat{\mathbb{Z}}(x)\right|=o_p(r_n).
\end{align*}

\noindent\textbf{Step 2.} 
We have
\begin{align*}
\sum_i\Ex&\left\|\frac{1}{\sqrt n}p^K(X_i)U_i \,  \omega(V_i)\right\|^3\\
&\lesssim \frac{K^{3/2}}{\sqrt n}.
\end{align*}
 Further,  recall that $r_n$ is a sequence satisfying
 \begin{align*}
 \frac{K^{5/2}}{ r_n^3\sqrt{n}}=o(1).
 \end{align*}
Hence we may apply Yurinskii's coupling (\cite[Theorem 10]{2002Pollard}) and consequently,  there exists a sequence of $\mathcal N(0,\Sigma)$ distributed random vectors $\mathcal Z$ such that
\begin{equation}\label{proof:Th:3:4:step:2_1}
  \left\|\frac{1}{\sqrt n}p^K(X_i)U_i \,  \omega(V_i) - \mathcal Z\right\| = o_p(r_n).
\end{equation}
Recall the definition $\mathbb Z(x)=p^K(x)'Q^{-1/2}\mathcal Z/\sqrt{\textsl{v}_K(x)}$, which is a centered Gaussian process with covariance function 
\begin{align*}
\Ex[\mathbb Z(x_1)\mathbb Z(x_2)] = p^K(x_1)'Q^{-1/2}\, \Sigma\, Q^{-1/2} p^K(x_2)\Big/\sqrt{ \textsl{v}_K(x_1)\textsl{v}_K(x_2)}.
\end{align*} 
Hence, by equation \eqref{proof:Th:3:4:step:2_1} we have
\begin{equation}\label{eq_step_2_UCB_exogenous}
  \sup_{x\in\mathcal{C}}\left|\widehat{\mathbb Z}(x) - \mathbb Z(x)\right| = o_p(r_n).
\end{equation}

\noindent\textbf{Step 3.} In this step we approximate the bootstrap process by a Gaussian process. Under the bootstrap distribution $\mathbb{P}^*$ each term $Y_i\,\widehat \omega(Y_i,X_i;\widehat \sol) - \widehat g(X_i)$ has mean zero for all $1\leq i\leq n$. Moreover, we have
\begin{equation*}
\frac{1}{n}\sum_{D_i}\Ex\left[\left.\widehat Q^{-1} p^K(X_i)\Big(Y_i\,\widehat \omega(V_i, \widehat \sol) - \widehat g(X_i)\Big)^2\varepsilon_i^2p^K(X_i)'\widehat Q^{-1}\right|Z^n\right]
= \widehat \Sigma.
\end{equation*}
Since $\Ex[|\varepsilon_i|^3|Z^n] < \infty$ uniformly in $i$, we have
\begin{align*}
\sum_i\Ex&\left[\left.\left\|\frac{1}{\sqrt{n}}p^K(X_i)\Delta_i\Big(Y_i\,\widehat \omega(V_i,\widehat \sol) - \widehat g(X_i)\Big)\varepsilon_i\right\|^3\right|Z^n\right]\\
 &\lesssim \frac{1}{\sqrt{n}}\,\Ex\|p^K(X)\|^2\sup_x\|p^K(x)\| \\
&\lesssim \frac{K^{3/2}}{\sqrt{n}},
\end{align*}
with probability approaching one (wpa1).
Again using \cite[Theorem 10]{2002Pollard}, conditional on the data $Z^n$, implies existence of  a  $\mathcal N(0,\widehat \Sigma)$ distributed random vectors $\mathcal Z^*$ such that
\begin{equation*}
  \left\|\frac{1}{\sqrt{n}}\sum_{D_i}p^K(X_i)\Big(Y_i\,\widehat \omega(V_i,\widehat \sol) - \widehat g(X_i)\Big) - \mathcal Z^*\right\| = o_{p^*}(r_n)
\end{equation*}
wpa1. Therefore, 
\begin{align*}
  \sup_{x\in\mathcal{C}}\left|\mathbb{Z}^B(x) - \frac{p^K(x)'\mathcal Z^*}{\sqrt{\widehat{\textsl{v}}_K(x)}}\right| = o_{p^*}(r_n)
\end{align*}
wpa1. Define a centered Gaussian process $\widetilde{\mathbb{Z}}(\cdot)$ under $\mathbb{P}^*$ as
\begin{align*}
  \widetilde{\mathbb Z}(x) = p^K(x)'Q^{-1/2}\Sigma^{1/2}\widehat \Sigma^{-1/2}\mathcal Z^*/\sqrt{ \textsl{v}_K(x)}
\end{align*}
which has the same covariance function as $\mathbb{Z}(x)$. By \cite[Lemma G.6]{chen2018optimal} below we have:
\begin{align*}
  \sup_{x\in\mathcal{C}}\left|\frac{p^K(x)'Q^{-1/2}}{\sqrt{\widehat{\textsl{v}}_K(x)}}\mathcal Z^* - \widetilde{\mathbb Z}(x)\right| = o_{p^*}(r_n)
\end{align*}
wpa1. This and the previous rate of convergence imply that
\begin{equation*}
  \sup_{x\in\mathcal{C}}\left|\mathbb Z^B(x) - \widetilde{\mathbb Z}(x)\right| = o_{p^*}(r_n)
\end{equation*}
wpa1, which completes the proof. 
\end{proof}

\section{Technical Assertions}\label{app:tech}
\begin{lem}\label{Lem:Mat}
Under the conditions of Theorem \ref{thm:est:cond:par} it holds
\begin{align*}
\|\widehat Q(\widehat\varphi)-Q(\varphi)\|=O_p(K/\sqrt n).
\end{align*}
\end{lem}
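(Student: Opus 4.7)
Under the normalization $\Ex[\Delta p^K(X)p^K(X)']=I_K$ used throughout the appendix, Corollary \ref{coro:fpw} implies $Q(\varphi):=\Ex[\Delta\,\omega(V)p^K(X)p^K(X)']=I_K$, so the claim reduces to $\|\widehat Q(\widehat\varphi)-I_K\|=O_p(K/\sqrt n)$. My plan is to split
\begin{align*}
\widehat Q(\widehat\varphi)-I_K=\bigl[\widetilde Q_\omega-I_K\bigr]+\bigl[\widehat Q(\widehat\varphi)-\widetilde Q_\omega\bigr],\qquad \widetilde Q_\omega:=n^{-1}\sum_{\Delta_i}\omega(V_i)\,p^K(X_i)p^K(X_i)',
\end{align*}
and bound each bracket in Frobenius norm, which dominates the spectral norm.

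The first bracket is a centered sample average since $\Ex\widetilde Q_\omega=I_K$. Using that $\omega$ is uniformly bounded (Assumption \ref{A_pos} together with $\mathcal B\subset\mathcal B_\alpha$), that $\sup_x\|p^K(x)\|^2\lesssim K$ by Assumption \ref{Ass_bas}(ii), and that $\Ex[\Delta\|p^K(X)\|^2]=K$ under the normalization, a direct variance computation yields
\begin{align*}
\Ex\|\widetilde Q_\omega-I_K\|_F^2\leq n^{-1}\Ex\bigl[\omega^2(V)\,\Delta\,\|p^K(X)\|^4\bigr]\lesssim n^{-1}\sup_x\|p^K(x)\|^2\,\Ex[\Delta\|p^K(X)\|^2]\lesssim K^2/n,
\end{align*}
hence $\|\widetilde Q_\omega-I_K\|\leq\|\widetilde Q_\omega-I_K\|_F=O_p(K/\sqrt n)$.

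The plug-in bracket is the main obstacle: a coarse bound of the form $\|\widehat\omega-\omega\|_\infty\cdot\|\widehat Q\|$ would multiply $K$ by the slow nonparametric sup-norm rate of $\widehat\varphi\widehat h$ to $\varphi h$ and spoil the target rate. Instead, I would mirror the empirical-process treatment used for term $I$ in the proof of Theorem \ref{thm:est:cond:par}. For $1\leq j,k\leq K$, set
\begin{align*}
h_{jk}(V,\phi,\psi):=\Delta\Bigl(\frac{1}{\phi(V)\psi(X,\phi)}-\omega(V)\Bigr)p_j(X)p_k(X),
\end{align*}
and consider the class $\mathcal F_{jk}=\{h_{jk}(\cdot,\phi,\psi):(\phi,\psi)\in\mathcal B\times\mathcal H\}$, whose envelope satisfies $F_{jk}(V)\lesssim \Delta\,|p_j(X)p_k(X)|$ since $1/(\phi\psi)$ and $\omega$ are uniformly bounded on $\mathcal B\times\mathcal H$. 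Since $\widehat\varphi\in\mathcal B$ and $\widehat h(\cdot,\widehat\varphi)\in\mathcal H$ with probability approaching one, applying Theorems~2.14.2 and 2.14.5 of \cite{Vaart2000} entrywise—using that the bracketing integrals of $\mathcal B$ and $\mathcal H$ are finite under Assumption \ref{Ass_bas}(v), as already verified in the proof of Theorem \ref{thm:est:cond:par}—and summing over $(j,k)$ yields
\begin{align*}
\sum_{j,k}\Ex\sup_{(\phi,\psi)\in\mathcal B\times\mathcal H}\Bigl|n^{-1}\sum_i\bigl(h_{jk}(V_i,\phi,\psi)-\Ex h_{jk}(V,\phi,\psi)\bigr)\Bigr|^2\lesssim n^{-1}\sum_{j,k}\|F_{jk}\|_V^2\lesssim n^{-1}\Ex[\Delta\|p^K(X)\|^4]\lesssim K^2/n,
\end{align*}
using $\sum_{j,k}\Ex[\Delta p_j^2(X)p_k^2(X)]=\Ex[\Delta\|p^K(X)\|^4]\leq \sup_x\|p^K(x)\|^2\,\Ex[\Delta\|p^K(X)\|^2]\lesssim K^2$. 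Combined with a sup-norm bound on the residual bias $\Ex h_{jk}(V,\widehat\varphi,\widehat h(\cdot,\widehat\varphi))$ via uniform consistency of $\widehat\varphi\widehat h$ to $\varphi h$ (the mean satisfies $\|\Ex h_{jk}|_{(\widehat\varphi,\widehat h)}\|_F\lesssim K\,\|\widehat\varphi\widehat h-\varphi h\|_\infty$), this delivers $\|\widehat Q(\widehat\varphi)-\widetilde Q_\omega\|_F=O_p(K/\sqrt n)$ and completes the proof.
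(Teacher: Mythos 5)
Your proof follows essentially the same route as the paper's: split $\widehat Q(\widehat\varphi)-Q(\varphi)$ into the sampling error of the true-weight matrix and the plug-in error from replacing $\varphi(\cdot)h(\cdot,\varphi)$ by $\widehat\varphi(\cdot)\widehat h(\cdot,\widehat\varphi)$, dominate the spectral norm by the Frobenius norm, and control the plug-in piece entrywise by the bracketing-entropy maximal inequalities of \cite{Vaart2000} over $\mathcal B\times\mathcal H$ --- which is precisely what the paper invokes when it refers to ``the analysis preceding inequality \eqref{key:ineq}.'' Your direct variance computation for the first bracket, giving $O_p(K/\sqrt n)$, is an elementary and perfectly adequate substitute for the paper's citation of \cite{belloni2015}, which yields the sharper but unneeded $O_p(\sqrt{K\log(n)/n})$; identifying $Q(\varphi)=I_K$ via Corollary \ref{coro:fpw} is also exactly what the paper does implicitly.

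The one step to tighten is your treatment of the residual mean $\Ex h_{jk}$ evaluated at $(\widehat\varphi,\widehat h)$. The bound $\|\Ex h_{jk}\|_F\lesssim K\,\|\widehat\varphi\widehat h-\varphi h\|_\infty$ delivers the target $O_p(K/\sqrt n)$ only if the first stage is $\sqrt n$-consistent in sup norm, which the sieve minimum-distance estimator of $\varphi$ is not (the paper itself stresses that this stage faces an ill-posed inverse problem and converges slowly). The paper's own proof does not isolate this term at all: it takes the supremum over all of $\mathcal B\times\mathcal H$ and folds the non-centered part into the envelope $\|F_{jk}\|_V$, so on this particular point your write-up and the paper's stand or fall together; a cleaner fix in either case is to restrict to a shrinking neighbourhood of $(\varphi,h)$, as the paper does for term $II$ in the proof of Theorem \ref{thm:inference:par}. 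Apart from this shared lacuna, your argument reproduces the paper's proof.
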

\begin{proof}
From \cite{belloni2015} we deduce
\begin{align*}
\|\widehat Q(\widehat\varphi)-Q(\varphi)\|&\leq \|\widehat Q(\widehat\varphi)-\widehat Q(\varphi)\|+\|\widehat Q(\varphi)-Q(\varphi)\|\\
&=\|\widehat Q(\widehat\varphi)-\widehat Q(\varphi)\|+O_p\big(\sqrt{(\log n)K/n}\big).
\end{align*}
Further, as the spectral norm is bounded by the Frobenius norm we have
\begin{align*}
\|\widehat Q(\widehat\varphi)-\widehat Q(\varphi)\|^2\leq \sum_{j,l=1}^K\Big|\frac{1}{n}\sum_{\Delta_i} \big(\frac{1}{\widehat\varphi(V_i)\,\widehat h(X_i,\widehat\varphi)}-\frac{1}{\varphi(V_i)\,h( X_i,\varphi)}\big)\, p_j(X_i)p_l(X_i)\Big|^2,
\end{align*}
where the term on the right hand side is of the order $O_p(K^2/n)$ which is due to the analysis preceding inequality \eqref{key:ineq}. 
\end{proof}

\begin{lem}\label{lem_cons_var}
 Let Assumptions \ref{A_instruments}--\ref{A:inf:par} be satisfied. Then,
\begin{align}
\Big|\sqrt{\frac{\widehat{\textsl{v}}_K(x)}{\textsl{v}_K(x)}} -1 \Big|&=o_p(1),\label{pw:cov:est}\\
\sup_{x}\Big|\sqrt{\frac{\widehat{\textsl{v}}_K(x)}{\textsl{v}_K(x)}} -1 \Big|&=O_p\Big(  \sqrt{n^{-1/2}K
\log(n)}\Big).\label{un:cov:est}
\end{align}
\end{lem}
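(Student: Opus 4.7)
The plan is to start from the algebraic decomposition
\begin{align*}
\widehat{\textsl{v}}_K(x)-\textsl{v}_K(x)
&=p^K(x)'\,\big[\widehat Q^{-1}\widehat \Omega\widehat Q^{-1}-\Omega\big]\,p^K(x),
\end{align*}
where $\widehat Q=n^{-1}\mathbf X'\mathbf X$, $\widehat\Omega=n^{-1}\sum_{\Delta_i=1} p^K(X_i)\widehat U_i^{2}\widehat\omega^{2}(V_i,\widehat\varphi)p^K(X_i)'$ and $\Omega=\Ex[\Delta p^K(X)U^{2}\omega^{2}(V)p^K(X)']$ (using $\Ex[U\Delta\omega(V)\mid X^{*}]=0$, which collapses $\Var(U\Delta\omega\mid X^{*})$ to $\Ex[U^{2}\Delta\omega^{2}\mid X^{*}]$). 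Combining this with the lower bound $\textsl{v}_K(x)\gtrsim\|p^K(x)\|^{2}$ already used in the proof of Theorem~\ref{thm:inference:par}, and with the elementary inequality $|\sqrt{a/b}-1|\leq |a-b|/b$ valid when $a/b$ is bounded, it is enough to establish $|\widehat{\textsl{v}}_K(x)-\textsl{v}_K(x)|/\|p^K(x)\|^{2}=o_p(1)$ pointwise and $O_p(\sqrt{n^{-1/2}K\log n})$ uniformly.

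I would then insert $\pm \widehat Q^{-1}\Omega\widehat Q^{-1}$ and $\pm\widehat Q^{-1}$ to write the bracket as three operator-norm pieces,
\begin{align*}
(\widehat Q^{-1}-I)\widehat\Omega\widehat Q^{-1}+(\widehat\Omega-\Omega)\widehat Q^{-1}+\Omega(\widehat Q^{-1}-I),
\end{align*}
which is controlled in operator norm once we have rates for $\|\widehat Q^{-1}-I\|$ and $\|\widehat\Omega-\Omega\|$. The matrix $\widehat Q^{-1}-I$ is handled by Lemma~\ref{Lem:Mat} together with the \cite{belloni2015} concentration bound $\|\widehat Q-I\|=O_p(\sqrt{K\log n/n})$, and $\|\widehat\Omega\|$, $\|\Omega\|$ are bounded because $\Ex[p^K(X)p^K(X)']$ has bounded spectrum. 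Together with $v_K(x)\gtrsim\|p^K(x)\|^{2}$, the contribution of the $\widehat Q^{-1}-I$ pieces is $O_p(\sqrt{K\log n/n})$, which is absorbed into the claimed rate.

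The substantive work is the bound on $\widehat\Omega-\Omega$, which I would split as
\begin{align*}
\widehat\Omega-\Omega
&=\underbrace{n^{-1}\sum_{\Delta_i}p^K(X_i)(\widehat U_i^{2}\widehat\omega^{2}_i-U_i^{2}\omega^{2}_i)p^K(X_i)'}_{A}
+\underbrace{n^{-1}\sum_{\Delta_i}p^K(X_i)U_i^{2}\omega^{2}_ip^K(X_i)'-\Omega}_{B}.
\end{align*}
Term $B$ is an i.i.d. average of bounded-in-expectation matrices with $\sup_x\|p^K(x)\|^{2}=O(K)$, so a matrix concentration (or entrywise union bound) gives $\|B\|=O_p(\sqrt{K\log n/n})$. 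Term $A$ is where the $n^{-1/4}$ in the uniform rate will appear: writing $\widehat U_i\widehat\omega_i-U_i\omega_i = (\widehat U_i-U_i)\widehat\omega_i+U_i(\widehat\omega_i-\omega_i)$, the first factor is controlled by the $L^{2}_X$ rate of $\widehat g$ from Theorem~\ref{thm:est:cond:par}, and the second by uniform consistency of $\widehat\omega(\cdot,\widehat\varphi)$, which itself follows from uniform consistency of $\widehat\varphi$ and the uniform lower bound on $\mathcal B$. The uniform version of $A$ is estimated by bounding the remainders in spectral/Frobenius norm and then invoking a maximal inequality on the class $\{p_j(\cdot)p_l(\cdot)\cdot(\cdot)\}_{j,l\leq K}$ with envelope functions of bounded second moment, in exactly the same way the term $I$ is handled in the proof of Theorem~\ref{thm:est:cond:par} via Theorem~2.14.2 of \cite{Vaart2000} together with Assumption~\ref{Ass_bas}~$(v)$.

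The main obstacle will be the uniform-in-$x$ step: getting the $\sqrt{n^{-1/2}K\log n}$ rate in \eqref{un:cov:est} requires simultaneously (i) an operator-norm concentration for $B$ with a $\log n$ price and (ii) a uniform-over-$\mathcal B\times\mathcal H$ control on the plug-in errors in $A$ at a rate compatible with the $n^{-1/4}$ order of $\|\widehat\varphi\widehat h-\varphi h\|_\infty$ that is implicit in the uniform consistency statements used in the proof of Theorem~\ref{thm:bands}. All other steps, including the pointwise claim \eqref{pw:cov:est}, follow routinely once these two bounds are in place.
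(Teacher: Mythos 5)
Your proposal follows essentially the same route as the paper's proof: the same reduction to showing $|\widehat{\textsl{v}}_K(x)-\textsl{v}_K(x)|=o_p(\|p^K(x)\|^2)$ via the lower bound $\textsl{v}_K(x)\gtrsim\|p^K(x)\|^2$, the same split of the middle matrix into a plug-in error term (your $A$, the paper's $\widehat\Sigma-\widetilde\Sigma$, handled by the decomposition $\widehat U\widehat\omega-U\omega=(\widehat g-g)\omega+(\widehat\omega-\omega)\widehat U$ and the entropy bounds from the proof of Theorem \ref{thm:inference:par}) and a sampling error term (your $B$, the paper's $\widetilde\Sigma-\Sigma$, which the paper bounds more crudely as $O_p(K/\sqrt n)$ by a second-moment/Frobenius argument rather than matrix concentration), with the $\widehat Q^{-1}$ factors controlled by Lemma \ref{Lem:Mat}. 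The only differences are bookkeeping ones, so the plan is sound and matches the paper.
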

\begin{proof}
Proof of \eqref{pw:cov:est}. 
Note that it is sufficient to establish $\widehat{\textsl{v}}_K(x)-\textsl{v}_K(x)=o_p(\|p^K(x)\|^2)$. 
We make use of the decomposition 
\begin{align}\label{dec:cov:est}
\widehat{\textsl{v}}_K(x) &-\textsl{v}_K(x)= 
p^K(x)'\,\big(\widehat\Sigma-\widetilde\Sigma\big)p^K(x)
+p^K(x)'\,\big(\widetilde\Sigma-\Sigma\big)p^K(x)+o_p(1)
\end{align}
where 
\begin{align*}
\widetilde \Sigma=\frac{1}{n}\sum_{D_i} p^K(X_j)U_i^2\omega^2(V_i) p^K(X_j)', 
\quad\quad\widehat \Sigma=\frac{1}{n}\sum_{D_i} p^K(X_j)\widehat U_i^2\widehat \omega^2(V_i,\widehat \varphi) p^K(X_j)',
\end{align*}
 and $\Sigma=\Ex\widetilde \Sigma$. 
 We further calculate
\begin{align*}
&|p^K(x)'(\widehat\Sigma-\widetilde \Sigma)p^K(x)|\\
&\leq \Big|p^K(x)'\frac{1}{n}\sum_{D_i}p^K(X_i)\Big(\widehat U_i^2 \widehat\omega^2(V_i;\widehat \varphi)-U_i^2\omega^2(V_i;\varphi)\Big)p^K(X_i)' p^K(x)\Big|\\
&\leq \underbrace{n^{-1}\sum_{D_i} \Big|p^K(x)'\big(\widehat U_i \widehat\omega(V_i;\widehat \varphi)-U_i\omega(V_i;\varphi)\big) p^K(X_i) \Big|^2}_{I}\\
&+ 2\underbrace{\Big|p^K(x)' \frac{1}{n}\sum_{D_i}p^K(X_i)\Big(\widehat U_i \widehat\omega(V_i;\widehat \varphi)- U_i \omega(V_i;\varphi)\Big)U_i \omega(V_i;\varphi)p^K(X_i)' p^K(x)\Big|}_{II}.
\end{align*}
Consider $I$. Using $\widehat U_i \widehat\omega(V_i,\widehat \varphi)-U_i\omega(V_i)=(\widehat g(X_i)-g(X_i))\omega(V_i)+(\widehat\omega(V_i,\widehat \varphi)-\omega(V_i))\widehat U_i$ and that $\omega$ is uniformly bounded from above, we obtain
by following the  proof of Theorem \ref{thm:inference:par} that
\begin{align*}
I&\lesssim
\Ex\big\|p^K(x)' p^K(X)\big\|^2\,\|\widehat g- g\|_\infty^2\\
&\quad+\|p^K(x)\|^2\Big(\sum_{j=1}^K\sup_{(\phi,\psi)\in\mathcal A_n}\Big|n^{-1}\sum_i h_j(V_i,\phi,\psi)-\Ex h_j(V,\phi,\psi)\Big|^2\\
&\qquad\qquad+\sum_{j=1}^K\sup_{(\phi,\psi)\in\mathcal A_n}\big|\Ex h_j(V,\phi,\psi)\big|^2\Big)\\
&=O_p\Big(\|p^K(x)\|^2\big(n^{-1}K
+ \|\gamma'p^K-g\|_\infty^2\big)\Big)\\
&=o_p(\|p^K(x)\|^2),
\end{align*}
using that $\Ex\big\|p^K(x)' p^K(X)\big\|^2\leq K$ and $\|\gamma'p^K-g\|_\infty^2=o(K/n)$. 
Again following the  proof of Theorem \ref{thm:inference:par} and making use of the Cauchy-Schwarz inequality yields
\begin{align*}
I &\leq\sqrt{II}\times\sqrt{n^{-1}\sum_i \Big|p^K(x)'U_i \omega(V_i) p^K(X_i)\Big|^2}\\
&\leq \sqrt{II}\times O_p\big(\|p^K(x)\|\sqrt{\Ex[U^2\omega^2(V)]}\big)\\
&=O_p\Big(\|p^K(x)\|^2 (n^{-1/2}K^{1/2}
+  \|\gamma'p^K-g\|_\infty)\Big)\\
&=o_p(\|p^K(x)\|^2),
\end{align*}
using the upper bound of $II$. 

Finally, we obtain
\begin{align*}
p^K(x)'(\widetilde\Sigma-\Sigma)p^K(x)=O_p\big( \|p^K(x)\|^2K/\sqrt{n }\big)
\end{align*}
which is due to the following calculation
\begin{align*}
&\Ex\|\widetilde\Sigma-\Sigma\|^2\\
&= \Ex\Big\|n^{-1}\sum_ip^K(X_i)\Delta_i U_i^2 \omega^2(V_i;\varphi)p^K(X_i)'- \Ex\big[p^K(X)\Delta U^2 \omega^2(V) p^K(X)'\big]\Big\|^2\\
& \leq n^{-1}\Ex\Big\|p^K(X)\Delta U \omega(V;\varphi)\Big\|^4\\
& \leq n^{-1}\sup_x\|p^K(x)\|^4\Ex|\Delta U \omega(V;\varphi)|^4\\
& \lesssim n^{-1}K^2,
\end{align*}
based on the fourth moment condition imposed in Assumption \ref{A:inf:par}.
This establishes consistency of the sieve variance estimator $\widehat{\textsl{v}}_K(x)$ and hence completes the proof. 

The result \eqref{un:cov:est} follows analogously.
\end{proof}
 \section{Extension to Selectively Missing Outcomes}\label{sec:mis:dep}
In many empirical situations, one may also want to control for selective missingness of the dependent variable. Below,  $\Delta^Y$ and $\Delta^X$ denote missingness indicators for the variables $Y^*$ and $X^*$, respectively.
To account for additional selective missingness of $Y^*$, we generalize Assumption \ref{A_instruments} to the exclusion restriction
\begin{align}\label{cond:ex}
\PP(\Delta^Y=1,\Delta^X=1|Y^*,X^*, W) = \PP(\Delta^Y=1,\Delta^X=1|Y^*,X^*).
\end{align}
The previous exclusion restriction has the interpretation that $W$ has no information on the selection indicators $\Delta^Y$ and $\Delta^X$ that is not captured in $(Y^*,X^*)$. In this section, we make use of the notation $\Delta=\Delta^Y\Delta^X$.
Table \ref{table:exclusion} depicts the exclusion conditions required in different sample selection scenarios.
\begin{table}[ht]	
\begin{center}
\renewcommand{\arraystretch}{1.5}
{\small
\begin{tabular}{|l|c|c|}
\hline
&{ $Y^*$ obs.} & $Y^*$ mis.\\
\hline
{ $X^*$ obs.}&  & $\PP(\Delta^Y=1|Y^*,X, W) = \PP(\Delta^Y=1|Y^*,X)$ \\\hline
$X^*$ mis.& $\PP(\Delta^X=1|Y,X^*, W) = \PP(\Delta^X=1|Y,X^*)$ & $\PP(\Delta=1|Y^*,X^*, W) = \PP(\Delta=1|Y^*,X^*)$\\
\hline
\end{tabular}
}
\end{center}
\caption{Exclusion restrictions for instrument $W$ depending on missingness.}\label{table:exclusion}
\end{table}

Under the exclusion restriction imposed in equation  \eqref{cond:ex} we can write the regression function $g$ as
\begin{align*}
g(x)&=\Ex[Y^*|X^*=x]\\
&=\int yf_{ Y^* |X^*}(y|x)\frac{f_{\Delta|Y^* X^*}(1|y,x)}{f_{\Delta | Y^* X^*}(1|y,x)}\,dy\\
&=\int y\frac{f_{\Delta Y^*|X^*}(1,y|x)}{f_{\Delta | Y^* X^*}(1|y,x)}\,dy\\
&=f_{\Delta^X|X^*}(1|x)\int y\frac{f_{\Delta^Y Y^*|\Delta^X X^*}(1,y|1,x)}{f_{\Delta | Y^* X^*}(1|y,x)}\,dy\\
&=\Ex\left[\frac{\Delta^Y Y^*\PP(\Delta^X=1|X^*)}{\PP(\Delta=1|Y^*,X^*)}\Big|\Delta^X=1, X^*=x\right],
\end{align*}
where the right hand side is identified as long as the fractional probability weight $\PP(\Delta^X=1|X^*=x)/\PP(\Delta=1|Y^*,X^*=x)$ is identified.
In the case of potentially missing dependent variable $Y^*$ and fully observed $X^*$, i.e., $X^*=X$,  we obtain, in particular,
\begin{align*}
g(x)&=\Ex\left[\frac{\Delta^Y Y^*}{\PP(\Delta^Y=1|Y^*,X)}\Big|X=x\right]
\end{align*}
as obtained by \cite{breunig2015} (see also Remark \ref{rem:fpw:ipw} when $X$ is exogenous to selection $D^Y$). 
Table \ref{table:ident} provides explicit forms of the regression function $g$ in different sample selection scenarios.
 \begin{table}[ht]	
\begin{center}
\renewcommand{\arraystretch}{2.5}
{\small \begin{tabular}{|l|c|c|}
\hline
&{ $Y^*$ observed $(Y^*=Y)$} & $Y^*$ selectively missing\\
\hline
{ $X^*$ obs. }&  & $\Ex\left[\frac{\Delta^Y Y^*}{\PP(\Delta^Y=1|Y^*,X)}\Big|X=x\right]$ \\\hline
$X^*$ mis.& $\Ex\left[\frac{Y\PP(\Delta^X=1|X^*)}{\PP(\Delta^X=1|Y,X^*)}\Big|\Delta^X=1, X^*=x\right]$ & $\Ex\left[\frac{\Delta^Y Y^*\PP(\Delta^X=1|X^*)}{\PP(\Delta=1|Y^*,X^*)}\Big|\Delta^X=1, X^*=x\right]$\\
\hline
\end{tabular}}
\end{center}
\caption{Identification results for regression function $g$ under different sample selection scenarios.}
\label{table:ident}
\end{table}
  \bibliography{BiB}

\begin{thebibliography}{40}
\providecommand{\natexlab}[1]{#1}
\providecommand{\url}[1]{\texttt{#1}}
\expandafter\ifx\csname urlstyle\endcsname\relax
  \providecommand{\doi}[1]{doi: #1}\else
  \providecommand{\doi}{doi: \begingroup \urlstyle{rm}\Url}\fi

\bibitem[Adeline and Delattre(2017)]{adeline2017}
A.~Adeline and E.~Delattre.
\newblock Some microeconometric evidence on the relationship between health and
  income.
\newblock \emph{Health economics review}, 7\penalty0 (1):\penalty0 27, 2017.

\bibitem[Albouy et~al.(2016)Albouy, Ehrlich, and Liu]{Albouyetal2016}
D.~Albouy, G.~Ehrlich, and Y.~Liu.
\newblock Housing demand, cost-of-living inequality, and the affordability
  crisis.
\newblock \emph{NBER Working Paper 22816}, 2016.

\bibitem[Belloni et~al.(2015)Belloni, Chernozhukov, Chetverikov, and
  Kato]{belloni2015}
A.~Belloni, V.~Chernozhukov, D.~Chetverikov, and K.~Kato.
\newblock Some new asymptotic theory for least squares series: Pointwise and
  uniform results.
\newblock \emph{Journal of Econometrics}, 186\penalty0 (2):\penalty0 345--366,
  2015.

\bibitem[Bingley and Martinello(2017)]{BinMar_17}
P.~Bingley and A.~Martinello.
\newblock Measurement error in income and schooling and the bias of linear
  estimators.
\newblock \emph{Journal of Labor Economics}, 35\penalty0 (4):\penalty0
  1117--1148, 2017.

\bibitem[Blundell et~al.(2007)Blundell, Chen, and
  Kristensen]{BCK07econometrica}
R.~Blundell, X.~Chen, and D.~Kristensen.
\newblock Semi-nonparametric iv estimation of shape-invariant engel curves.
\newblock \emph{Econometrica}, 75\penalty0 (6):\penalty0 1613--1669, 2007.

\bibitem[Bohannon(2015)]{bohannon2015muscle}
R.~W. Bohannon.
\newblock Muscle strength: clinical and prognostic value of hand-grip
  dynamometry.
\newblock \emph{Current Opinion in Clinical Nutrition \& Metabolic Care},
  18\penalty0 (5):\penalty0 465--470, 2015.

\bibitem[B{\"o}rsch-Supan et~al.(2013)B{\"o}rsch-Supan, Brandt, Hunkler, Kneip,
  Korbmacher, Malter, Schaan, Stuck, and Zuber]{abs_2013}
A.~B{\"o}rsch-Supan, M.~Brandt, C.~Hunkler, T.~Kneip, J.~Korbmacher, F.~Malter,
  B.~Schaan, S.~Stuck, and S.~Zuber.
\newblock Data resource profile: the survey of health, ageing and retirement in
  europe (share).
\newblock \emph{International journal of epidemiology}, 42\penalty0
  (4):\penalty0 992--1001, 2013.

\bibitem[Breunig(2019)]{breunig2017}
C.~Breunig.
\newblock Testing missing at random using instrumental variables.
\newblock \emph{Journal of Business \& Economic Statistics}, 37\penalty0
  (2):\penalty0 223--234, 2019.

\bibitem[Breunig et~al.(2018)Breunig, Mammen, and Simoni]{breunig2015}
C.~Breunig, E.~Mammen, and A.~Simoni.
\newblock Nonparametric estimation in case of endogenous selection.
\newblock \emph{Journal of Econometrics}, 202\penalty0 (2):\penalty0 268--285,
  2018.

\bibitem[Chen et~al.(2007)Chen, Zeng, and Ibrahim]{chen2007}
Q.~Chen, D.~Zeng, and J.~G. Ibrahim.
\newblock Sieve maximum likelihood estimation for regression models with
  covariates missing at random.
\newblock \emph{Journal of the American Statistical Association}, 102\penalty0
  (480):\penalty0 1309--1317, 2007.

\bibitem[Chen(2007)]{Chen07}
X.~Chen.
\newblock Large sample sieve estimation of semi-nonparametric models.
\newblock \emph{Handbook of Econometrics}, 2007.

\bibitem[Chen and Christensen(2015)]{chen2015optimal}
X.~Chen and T.~M. Christensen.
\newblock Optimal uniform convergence rates and asymptotic normality for series
  estimators under weak dependence and weak conditions.
\newblock \emph{Journal of Econometrics}, 188\penalty0 (2):\penalty0 447--465,
  2015.

\bibitem[Chen and Christensen(2018)]{chen2018optimal}
X.~Chen and T.~M. Christensen.
\newblock Optimal sup-norm rates and uniform inference on nonlinear functionals
  of nonparametric iv regression.
\newblock \emph{Quantitative Economics}, 9\penalty0 (1):\penalty0 39--84, 2018.

\bibitem[Chen and Pouzo(2012)]{chenpouzo2012}
X.~Chen and D.~Pouzo.
\newblock Estimation of nonparametric conditional moment models with possibly
  nonsmooth generalized residuals.
\newblock \emph{Econometrica}, 80\penalty0 (1):\penalty0 277--321, 2012.

\bibitem[Chen et~al.(2003)Chen, Linton, and Van~Keilegom]{CLVK03econometrics}
X.~Chen, O.~Linton, and I.~Van~Keilegom.
\newblock Estimation of semiparametric models when the criterion function is
  not smooth.
\newblock \emph{Econometrica}, 71:\penalty0 1591--1608, 2003.

\bibitem[Chernozhukov et~al.(2014)Chernozhukov, Chetverikov, Kato,
  et~al.]{chernozhukov2014}
V.~Chernozhukov, D.~Chetverikov, K.~Kato, et~al.
\newblock Anti-concentration and honest, adaptive confidence bands.
\newblock \emph{The Annals of Statistics}, 42\penalty0 (5):\penalty0
  1787--1818, 2014.

\bibitem[Cutler et~al.(2006)Cutler, Deaton, and Lleras-Muney]{CutDeaLLe_06}
D.~Cutler, A.~Deaton, and A.~Lleras-Muney.
\newblock The determinants of mortality.
\newblock \emph{Journal of economic perspectives}, 20\penalty0 (3):\penalty0
  97--120, 2006.

\bibitem[Cutler et~al.(2011)Cutler, Lleras-Muney, and Vogl]{CutLleVog_11}
D.~M. Cutler, A.~Lleras-Muney, and T.~Vogl.
\newblock Socioeconomic status and health: Dimensions and mechanisms.
\newblock In \emph{The Oxford Handbook of Health Economics}. 2011.

\bibitem[Das et~al.(2003)Das, Newey, and Vella]{das2003}
M.~Das, W.~K. Newey, and F.~Vella.
\newblock Nonparametric estimation of sample selection models.
\newblock \emph{The Review of Economic Studies}, 70\penalty0 (1):\penalty0
  33--58, 2003.

\bibitem[Deaton and Paxson(1998)]{DeatPax_98}
A.~Deaton and C.~Paxson.
\newblock Aging and inequality in income and health.
\newblock \emph{American Economic Review: Papers and Proceedings}, 88\penalty0
  (2):\penalty0 248--253, 1998.

\bibitem[D'Haultfoeuille(2010)]{2010Hault}
X.~D'Haultfoeuille.
\newblock A new instrumental method for dealing with endogenous selection.
\newblock \emph{Journal of Econometrics}, 154\penalty0 (1):\penalty0 1--15,
  2010.

\bibitem[Dodds et~al.(2014)Dodds, Syddall, Cooper, Benzeval, Deary, Dennison,
  Der, Gale, Inskip, Jagger, et~al.]{dodds2014grip}
R.~M. Dodds, H.~E. Syddall, R.~Cooper, M.~Benzeval, I.~J. Deary, E.~M.
  Dennison, G.~Der, C.~R. Gale, H.~M. Inskip, C.~Jagger, et~al.
\newblock Grip strength across the life course: normative data from twelve
  british studies.
\newblock \emph{PloS one}, 9\penalty0 (12):\penalty0 e113637, 2014.

\bibitem[Dustmann et~al.(2018)Dustmann, Fitzenberger, and
  Zimmermann]{DusFitZim2018}
C.~Dustmann, B.~Fitzenberger, and M.~Zimmermann.
\newblock Housing expenditures and income inequality.
\newblock \emph{CReAM DP 16/18}, 2018.

\bibitem[Fang et~al.(2018)Fang, Zhao, and Shao]{zhaostatistica}
F.~Fang, J.~Zhao, and J.~Shao.
\newblock Imputation-based adjusted score equations in generalized linear
  models with nonignorable missing covariate values.
\newblock \emph{Statistica Sinica}, 28\penalty0 (4):\penalty0 1677--1701, 2018.

\bibitem[Goebel et~al.(2014)Goebel, Spiess, Witte, and
  Gerstenber]{Goebel_etal_2014}
J.~Goebel, C.~K. Spiess, N.~R. Witte, and S.~Gerstenber.
\newblock Die verknuepfung des soep mit microm-indikatoren: Der
  microm--soep--datensatz.
\newblock \emph{SOEP Survey Paper 233}, 2014.

\bibitem[Hu and Schennach(2008)]{hu2008}
Y.~Hu and S.~M. Schennach.
\newblock Instrumental variable treatment of nonclassical measurement error
  models.
\newblock \emph{Econometrica}, 76\penalty0 (1):\penalty0 195--216, 2008.

\bibitem[Little and Rubin(2002)]{little2002}
R.~J. Little and D.~B. Rubin.
\newblock Statistical analysis with missing data.
\newblock 2002.

\bibitem[Mammen(1993)]{mammen1993}
E.~Mammen.
\newblock Bootstrap and wild bootstrap for high dimensional linear models.
\newblock \emph{The Annals of Statistics}, pages 255--285, 1993.

\bibitem[Newey(1997)]{Newey1997}
W.~K. Newey.
\newblock Convergence rates and asymptotic normality for series estimators.
\newblock \emph{Journal of Econometrics}, 79\penalty0 (1):\penalty0 147 -- 168,
  1997.

\bibitem[Newey and Powell(2003)]{NP03econometrica}
W.~K. Newey and J.~L. Powell.
\newblock Instrumental variable estimation of nonparametric models.
\newblock \emph{Econometrica}, 71:\penalty0 1565--1578, 2003.

\bibitem[Pollard(2002)]{2002Pollard}
D.~Pollard.
\newblock \emph{A user's guide to measure theoretic probability}, volume~8.
\newblock Cambridge University Press, 2002.

\bibitem[Preston(1975)]{Pre_1975}
S.~Preston.
\newblock The changing relation between mortality and level of economic
  development.
\newblock \emph{Population studies}, 29\penalty0 (2):\penalty0 231--248, 1975.

\bibitem[Quigley and Raphael(2004)]{QuiRap2004}
J.~M. Quigley and S.~Raphael.
\newblock Is housing unaffordable? why isn't it more affordable?
\newblock \emph{Journal of Economic Perspectives}, 18\penalty0 (1):\penalty0
  191--214, 2004.

\bibitem[Ramalho and Smith(2013)]{ramalho2013}
E.~A. Ramalho and R.~J. Smith.
\newblock Discrete choice non-response.
\newblock \emph{The Review of Economic Studies}, 80\penalty0 (1):\penalty0
  343--364, 2013.

\bibitem[Rantanen et~al.(1999)Rantanen, Guralnik, Foley, Masaki, Leveille,
  Curb, and White]{rantanen1999midlife}
T.~Rantanen, J.~M. Guralnik, D.~Foley, K.~Masaki, S.~Leveille, J.~D. Curb, and
  L.~White.
\newblock Midlife hand grip strength as a predictor of old age disability.
\newblock \emph{Jama}, 281\penalty0 (6):\penalty0 558--560, 1999.

\bibitem[Schwandt(2018)]{Schwand_18}
H.~Schwandt.
\newblock Wealth shocks and health outcomes: evidence from stock market
  fluctuations.
\newblock Cepr discussion paper no. dp12562, 2018.

\bibitem[Tang et~al.(2003)Tang, Little, and Raghunathan]{tang2003}
G.~Tang, R.~J. Little, and T.~E. Raghunathan.
\newblock Analysis of multivariate missing data with nonignorable nonresponse.
\newblock \emph{Biometrika}, 90\penalty0 (4):\penalty0 747--764, 2003.

\bibitem[van~der Vaart and Wellner(2000)]{Vaart2000}
A.~van~der Vaart and J.~Wellner.
\newblock \emph{{Weak Convergence and Empirical Processes: With Applications to
  Statistics (Springer Series in Statistics)}}.
\newblock Springer, corrected edition, Nov. 2000.
\newblock ISBN 0387946403.

\bibitem[Wagner et~al.(2007)Wagner, Frick, and Schupp]{WagFriSch_07}
G.~Wagner, J.~Frick, and J.~Schupp.
\newblock {The German Socio-Economic Panel Study (SOEP) - Scope, Evolution and
  Enhancements}.
\newblock \emph{Schmollers Jahrbuch}, 127\penalty0 (1):\penalty0 139--169,
  2007.

\bibitem[Zhao and Shao(2015)]{zhao2015}
J.~Zhao and J.~Shao.
\newblock Semiparametric pseudo-likelihoods in generalized linear models with
  nonignorable missing data.
\newblock \emph{Journal of the American Statistical Association}, 110\penalty0
  (512):\penalty0 1577--1590, 2015.

\end{thebibliography}
\end{document}